\newcommand{\BEQA}{\begin{eqnarray}}
\newcommand{\EEQA}{\end{eqnarray}}
\newcommand{\red}[1]{\textcolor{red}{ #1}}
\newtheorem{theorem}{{\bf Theorem}}
\newtheorem{corollary}{{\bf Corollary}}
\newtheorem{definition}{{\bf Definition}}
\newtheorem{rem}{{\bf Remark}}
\newtheorem{example}{{\bf Example}}
\begin{document}
\bstctlcite{IEEEexample:BSTcontrol}
\title{{Optimal Energy  {Consumption with } Communication, Computation, Caching {and} QoI-Guarantee}}

\author{Faheem Zafari$^{1,*}$, Jian Li$^{2,*}$, Kin K. Leung$^3$, Don Towsley$^4$ and Ananthram Swami$^5$\\\small{ ${}^{1,3}$Imperial College London, ${}^{2,4}$University of Massachusetts Amherst, ${}^5$U.S. Army Research Laboratory\\${}^{1,3}$\{faheem16, kin.leung\}@imperial.ac.uk, ${}^{2,4}$\{jianli, towsley\}@cs.umass.edu, ${}^5$ananthram.swami.civ@mail.mil\\${}^*$Co-primary authors}}
\maketitle

\begin{abstract}	
Energy efficiency is a fundamental requirement of modern data communication systems, and its importance is reflected in much recent work on performance analysis of system energy consumption.  However, most works have only focused on communication and computation costs, but do not account for caching costs.  Given the increasing interest in cache networks, this is a serious deficiency.  {In this paper, we consider the problem of energy consumption in data communication, compression and caching (C$3$) with a Quality of Information (QoI) guarantee in a communication network. {Our goal is to identify the optimal data compression rate and data placement over the network to minimize the overall energy consumption in the network.}  
	The formulated problem is a \emph{Mixed Integer Non-Linear Programming} (MINLP) problem with non-convex functions, which is NP-hard in general. }
{We} propose a variant of spatial branch and bound algorithm (V-SBB), that can {provide} the $\epsilon$-global  optimal solution to {our problem}.  
{We numerically show that our C3 optimization framework can improve the energy efficiency up to 88\% compared to  any C2 optimization between communication and computation or caching. Furthermore, for our energy consumption problem, V-SBB {provides comparatively better solution than some other MINLP solvers.}}
\end{abstract}

\begin{IEEEkeywords}
	Energy Tradeoff, Data analytics, Data Caching, Quality of Information, Mixed Integer Non-Linear Programming, Spatial Branch-and-Bound
\end{IEEEkeywords}

\section{Introduction}\label{sec:intro}

The rapid growth of smart environments, and advent of Internet of Things (IoT) have led to the generation of large amounts of data. However, it is a daunting task to transmit enormous data through traditional networks due to limited bandwidth and energy limitations  \cite{nazemi2016qoi}.  
These data need to be efficiently compressed, transmitted, and cached to satisfy the Quality of Information (QoI) required by end users.  In fact, many wireless components operate on limited battery power supply and are usually deployed in remote or inaccessible areas, which necessitates the need for  designs that can enhance the  energy efficiency of the system with a QoI guarantee. 

A particular example of modern systems that require high energy efficiency is the wireless sensor network (WSN). Consider a WSN with various types of sensors, which can generate enormous amount of data to serve  end users.  On one hand, data compression has been adopted to reduce transmission (communication) cost at the expense of computation cost.  On the other hand, caches can be used as a mean of reducing transmission costs and access latency, thus  enhancing QoI but with the expense of the added caching cost.  Hence, there exists a tradeoff in energy consumption due to  data communication, computation and caching.  {This raises the question: what is the right balance between compression and caching so as to minimize the total energy consumption of the network?}

In this paper, we formulate an optimization problem {to find the optimal data compression rate and data placement to minimize the energy consumed due to data compression, communication and caching with QoI guarantee in a communication network. The formulated problem is a Mixed Integer Non-Linear Programming problem with non-convex functions, which is NP-hard in general. We propose a variant of spatial branch and bound algorithm that guarantees $\epsilon$-global\footnote{{$\epsilon$-global optimality means that the obtained solution is within $\epsilon$ tolerance of the global optimal solution.}} optimality. }

 Each node has the ability to compress and cache the data with some finite storage capacity.  We focus on wireless sensor networks as our motivating example. In particular, as shown in Figure~\ref{fig:example}, we assume that only edge sensors generate data, and there exists a single sink node that collects and serves the requests for {the data generated in this network}. {The model can be extended to include any arbitrary node that produces data at the expense of added notational complexity.}

\begin{figure}
\centering
	\includegraphics[width=0.8\linewidth]{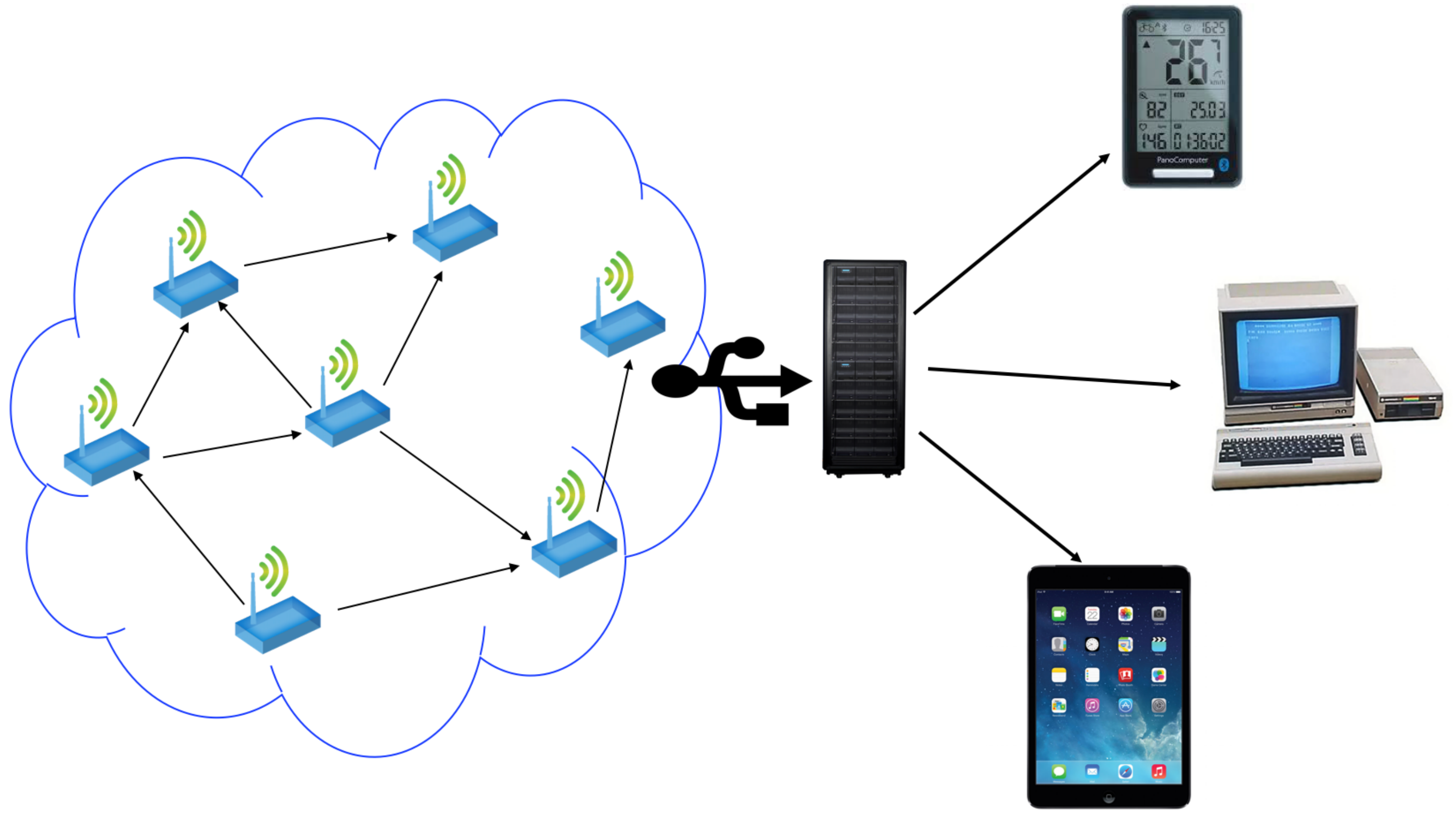}
	\vspace{-0.1in}
	\caption{A general wireless sensor network.}
\label{fig:example}
\end{figure} 

\noindent{\textit{Computation:}} 
Data aggregation \cite{rajagopalan06, fasolo07} is the process of gathering data from multiple generators (e.g., sensors), compressing them to eliminate redundant information and then  providing the summarized  information to end users.  Since only part of the original data is transmitted, data aggregation can conserve a large amount of energy.  A common assumption in previous works is that energy required to compress data is smaller than that needed to transmit data.  {Therefore, data compression was considered a viable technique for reducing energy consumption. }
 However, it has been shown \cite{barr2006energy} that computational energy cost can be significant and may cause a net-energy increase if  data are compressed beyond a certain threshold. Hence, it is necessary to consider both  transmission and computation costs, and it is important to characterize the trade-off between them\cite{nazemi2016qoi}.

\noindent{\textit{Caching:}} Caches have been widely used in networks and distributed systems to improve performance by storing information locally, which jointly reduces access latency and bandwidth requirements, and hence improves user experience. Content Distribution Networks (CDNs), Software Defined Networks (SDNs),  Named Data Networks (NDNs) and Content Centric Networks (CCNs) are important examples of such systems. The fundamental idea behind caching is to make information available at a location closer to the end-user.  Again, most  previous work focused on designing caching algorithms to enhance system performance without considering the energy cost of caching. Caching can reduce the transmission energy by storing a local copy of the data at the requesting node (or close by), hence eliminating the need for multiple retransmission from the source node to the requesting node.   However, caching itself can incur significant energy costs \cite{choi2012network}.  Therefore, {analyzing the impact of caching on overall energy consumption in the  network (along with data communication and compression) is critical for system design.} 

\noindent{\textit{Quality of Information (QoI):}} The notion of QoI required by end users is affected by many factors. In particular, the degree of the data aggregation in a system is crucial for QoI. 
It has been shown that data aggregation can deteriorate QoI in some situations \cite{ehikioya1999characterization}. 
Thus an energy efficient design for appropriate data aggregation with a guaranteed QoI is desirable. 

We focus on a tree-structured sensor network where each leaf node generates  data, and compresses and transmits the data to the sink node in the network, which serves the requests for these data from devices outside this network.  Examples of such a setting are military sites, wireless sensors or societal networks, where a large number of devices gather data, and desire to transmit the local information to any device outside this network that requires this information.  {The objective of our work is to obtain optimal data compression rate at each node, and an optimal data placement in the network for minimizing energy consumption with QoI guarantee.}

%
 
 \vspace{-0.2in}
 \subsection{Organization and Main Results}
{Section \ref{sec:rel} presents a review of relevant literature.} In Section~\ref{sec:model}, we describe our system model in which nodes are logically arranged as a tree. Each node receives and compresses data from its children node(s). The compressed data are transmitted and further compressed towards the sink node.  Each node can also cache the compressed data locally.  In Section~\ref{sec:opt}, we formulate the problem of energy-efficient data compression, communication and caching with QoI constraint as a {MINLP problem with non-convex functions}, which is {NP-hard} in general. We then show that there exists an equivalent problem obtained through symbolic reformation \cite{smith1996global} in Section~\ref{sec:sBNB}, and propose a variant of the Spatial Branch-and-Bound (V-SBB) algorithm to solve it.  We show that our proposed algorithm can achieve  $\epsilon$-global optimality.

In Section~\ref{sec:results}, we evaluate the performance of our optimization framework {and show that the use of caching along with data compression and communication can significantly improve the energy efficiency of a communication network. {More importantly,  we observe that with the joint optimization of data communication, computation and caching (C$3$),  energy efficiency can be improved by as much as $88\%$ compared to only optimizing communication and computation, or communication and caching (C$2$). The improvement  depends on the values of parameters in the model and the magnitude of improvement varies with different energy costs of the model.   While the improvement in energy efficient is important, our framework helps in characterizing and analyzing the enhancement in energy efficiency for different network settings.  } We also evaluate the performance of the }  proposed V-SBB algorithm  through extensive numerical studies.  In particular, we make a thorough comparison with other MINLP solvers Bonmin \cite{bonami2008algorithmic}, NOMAD \cite{le2011algorithm}, Matlab's genetic algorithm (GA), {Baron \cite{tawarmalani2005polyhedral}, SCIP \cite{achterberg2009scip} and Antigone \cite{misener2014antigone}} under different network scenarios.  The results show that our algorithm can achieve  $\epsilon$-global optimality,  and {the achieved objective function value (we achieve a lower objective function value for a minimization problem) is mostly better than stochastic algorithms such as NOMAD, GA while it performs comparably with deterministic algorithms such as Baron, Bonmin, SCIP and Antigone.
	}
Furthermore, our algorithm {provides a solution in varying network situations even when other solvers such as Bonmin, and SCIP  are not able to. } 
   We provide concluding remarks in Section~\ref{sec:con}.

\subsection{Related Work}\label{sec:rel}

To the best of our knowledge,   there is no prior work that jointly considers communication, computation and caching costs in {distributed} networks with a QoI guarantee for end users. 

\noindent{\textit{\textbf{Data Compression:}}} Compression is a key operation in modern communication networks and has been supported by many data-parallel programming models \cite{boykin14}.  For WSNs, data compression is usually performed over a hierarchical topology to improve communication energy efficiency \cite{rajagopalan06}, whereas we focus on energy tradeoff between communication, computation and caching. 

\noindent{\textit{\textbf{Data Caching:}}}  {Caching} plays a significant role in many systems with hierarchical topologies, e.g., WSNs, microprocessors, CDNs etc.  There is a rich literature on the performance of caching in terms of designing different caching algorithms, e.g., \cite{jian17,ioannidis16}, and we do not attempt to provide an overview here.  
However, none of these work considered the costs of caching, which may be significant in some systems \cite{choi2012network}. The recent paper by Li et al. \cite{jianfaheem18icpe} is closest to the problem we tackle here.  The differences between our work and \cite{jianfaheem18icpe} are mainly from two perspectives. First, the mathematical formulations are quite different, we consider energy tradeoffs between C3 while \cite{jianfaheem18icpe} focused on C2.  Second, we provide a $\epsilon$-optimal solution to a MINLP problem while \cite{jianfaheem18icpe} aimed at developing approximation algorithms.


\noindent{\textit{\textbf{Energy Costs:}}}  While optimizing energy costs in wireless sensor networks has been extensively studied \cite{heinzelman2000energy}, existing work primarily is concerned with routing \cite{manjeshwar2001teen}, MAC protocols \cite{heinzelman2000energy}, and clustering \cite{ye2005eecs}. With the growing deployment of smart sensors in modern systems \cite{nazemi2016qoi}, in-network data processing, such as data aggregation, has been widely used as a mean of reducing system energy cost by lowering the data volume for transmission.

\section{Analytical Model}\label{sec:model}
 
  \begin{figure}
	\centering
	\includegraphics[width=0.8\textwidth]{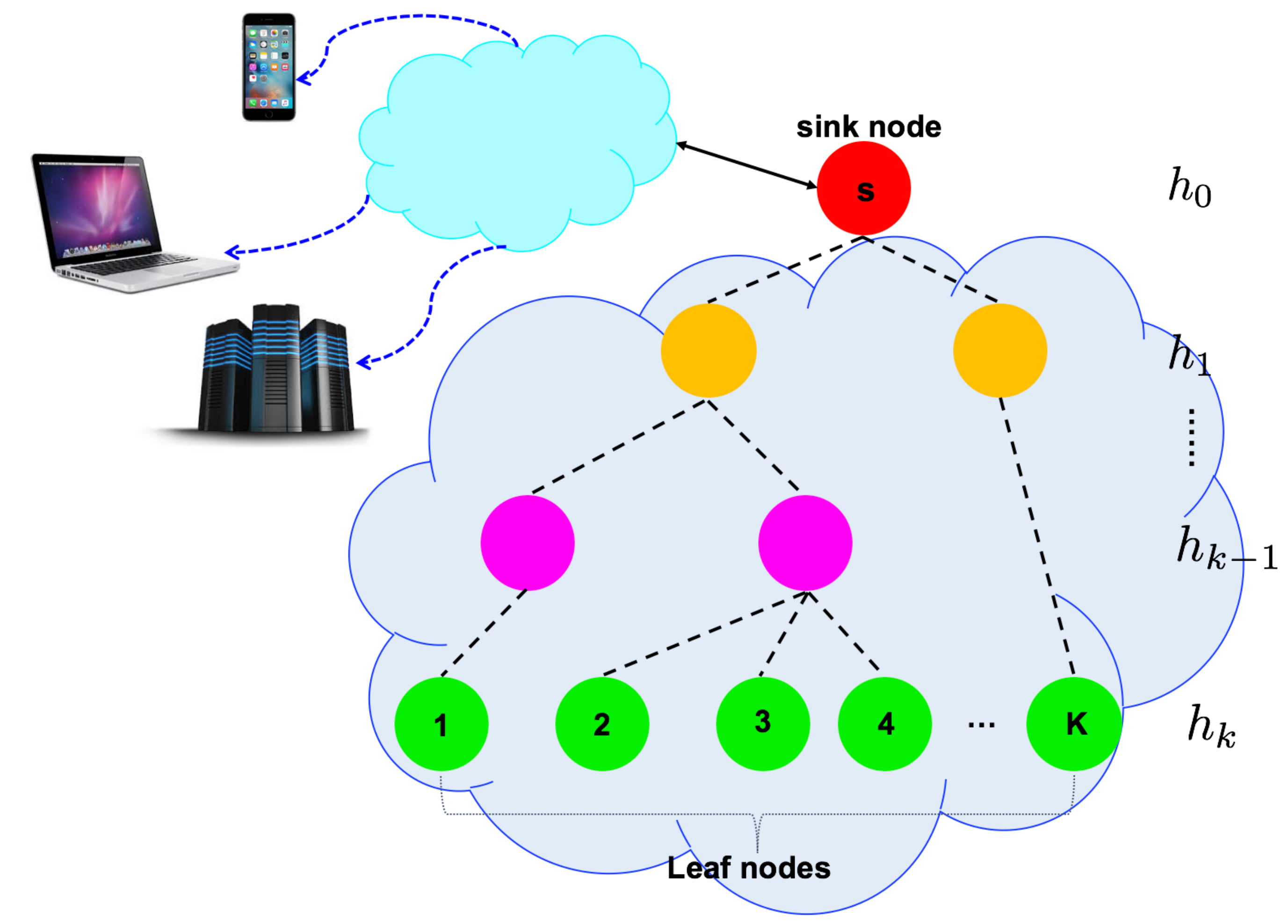}
	\vspace{-0.15in}
	\caption{Tree-Structured Network Model.}
	\protect\label{fig:1}
\end{figure}

We represent the network as a directed graph $G=(V, E).$ For simplicity, we consider a tree, with $N=|V|$ nodes, as shown in Figure~\ref{fig:1}. {It is possible to generalize our framework to general network topology with arbitrary source nodes, provided that the  route between the source and requesting node is known. } Node $v \in V$ is capable of storing $S_v$ amount of data.  Let $\mathcal{K}\subseteq V$ with $K=|\mathcal{K}|$ be the set of leaf nodes, i.e., $\mathcal{K}=\{1, 2, \cdots, K\}$.  {Time is partitioned in periods of equal length $T>0$ and data generated in each period are independent.  Without loss of generality (W.l.o.g.), we consider one particular period in the remainder of the paper.}  We assume that only leaf nodes $k\in\mathcal{K}$ can generate data, and all other nodes in the tree receive and compress data from their children nodes, and either cache or transmit the compressed data to their parent nodes {during time T}. {Arbitrary source nodes can also be incorporated into the model at the cost of added notational and model complexity. }

Let $y_k$ be the amount of data generated by leaf node $k\in\mathcal{K}$. The data generated at the leaf nodes are transmitted up the tree to sink node $s,$ which serves requests for data generated in the network.  Let $h(k)$ be the depth of node $k$ in the tree. W.l.o.g., we assume that the sink node is located at level $h(s)=0.$  We represent a path from node $k$ 
  to the sink node as the unique path $\mathcal{H}^k$ of length $h(k)$ as a sequence {$\{h_0^k, h_1^k, \cdots, h_{h(k)}^k\}$} of nodes $h_j^k\in V$ such that $(h_j^k, h_{j+1}^k)\in E,$ where $h_0^k\triangleq s$ (i.e., the sink node) and $h_{h(k)}^k\triangleq k$ (i.e., the node itself).  
 

We denote the per-bit reception, transmission and compression cost of node $v \in V$ as $\varepsilon_{vR}, \varepsilon_{vT}$, and $\varepsilon_{vC},$ respectively.  Each node $h_i^k$ along the path $\mathcal{H}^k$ can compress the data generated by leaf node $k$ with a \emph{data reduction rate} $\delta_{k,i}$, where $0\leq\delta_{k, i}\leq 1,$ $\forall i, k.$ 
The reduction rate characterizes the degree to which a node can compress the received data, which plays an important role for determining the QoI.

The higher the value of $\delta_{k, i}$, the lower the compression will be, and vice versa. The higher the degree of data compression, the larger will be the amount of energy consumed by compression.  Similarly, caching the data  closer to the sink node may reduce the transmission cost for serving the request, however, each node only has finite storage capacity. We study the trade-off among the energy consumed at each node for transmitting, compression and caching the data.

Denote the total energy consumption at node $v$ as $E_v$, which consists of reception cost $E_{vR}$,  transmission cost $E_{vT}$, computation cost $E_{vC}$ and storage (caching) cost $E_{vS}$; it takes the form 
\vspace{-0.05in}
\begin{align}\label{eq:energy}
&E_v=E_{vR}+E_{vT}+E_{vC}+E_{vS},\nonumber\displaybreak[0]\\
\text{where} \quad&E_{vR}=y_{v}\varepsilon_{vR}, \quad E_{vT}=y_{v}\varepsilon_{vT}\delta_{v}, \nonumber\displaybreak[0]\\
&E_{vC}=y_{v}\varepsilon_{vC}l_v(\delta_v),  \quad E_{vS}=w_{ca}y_{v}T. 
\end{align}
 {The above energy consumption models for data transmission, compression and caching have been used in literature \cite{nazemi2016qoi,laporta2012,choi2012network} and are suitable for highlighting the energy consumption in a communication network. However, our formulation can be extended to incorporate various other energy consumption models as well.  In \eqref{eq:energy}}, $l_v(\delta_v)$ captures the computation energy. As computation energy increases with the degree of compression, we assume that $l_v(\delta_v)$ is a continuous, decreasing and differentiable function of the reduction rate.  One candidate function is $l_v(\delta_v)=1/\delta_{v}-1$ \cite{nazemi2016qoi,laporta2012}.  Moreover, we consider an energy-proportional model \cite{choi2012network} for caching, i.e., $E_{vS}=w_{ca}y_{v}T$ if the received data $y_v$ is cached for a duration of $T$  where $w_{ca}$ represents the power efficiency of caching,  which strongly depends on the storage hardware technology. W.l.o.g., $w_{ca}$ is assumed to be identical for all the nodes.
For simplicity, denote {$f(\delta_v)$}= $\varepsilon_{vR}$+$\varepsilon_{vT}\delta_v$+$\varepsilon_{vC}l_v(\delta_v)$ as the sum of per-bit reception, transmission and compression cost at node $v$ per unit time.

During  time period  $T$, we assume that there are $R_{k}$ requests at sink node $s$ for data $y_k$ generated by leaf node $k$. For simplicity, we assume that the number of requests for the data of a node $k$ is constant.  The boolean variable $b_{k,i}$ equals $1$ if the data from node $k$ is stored along the path $\mathcal{H}^k$  at node $h_i^k,$ otherwise it equals $0$. We allow the data to be cached at only one node along the unique path between the leaf node and root node.  For ease of notation, we define $b_{k, h(k)}$ by $b_k.$  Let $C_v$ denote the set of leaf nodes $k \in \mathcal{K}$ that are descendants of node $v$.
We also assume that the {energy cost for searching for data at different nodes} in the network is negligible \cite{nazemi2016qoi, ioannidis16}.
For convenience, let $f_{k,h(k)} \triangleq f_k$ and $\delta_{k,h(k)}\triangleq \delta_k.$  For  ease of exposition,  the parameters used throughout this paper are summarized in Table~\ref{tab:notations}.
\begin{table}
	\centering
	\caption{Summary of notations}
	\vspace{-0.1in}
	\begin{tabular}{|l|p{11cm}|}
		\hline
		\textbf{Notation} & \textbf{Description} \\ \hline
		$y_{k}$ 	&  number of data (bits) generated at node $k$ \\ \hline
		$\delta_{k,v}$	& reduction rate at node $v$, is the ratio of amount of output data to input data  \\ \hline
		$\gamma$	& the QoI threshold    \\ \hline
		 $\varepsilon_{vR}$ & per-bit reception cost of node $v$   \\ \hline
		$\varepsilon_{vT}$ & per-bit transmission cost of node $v$   \\ \hline
		   $\varepsilon_{vC}$& per-bit compression cost of node $v$ \\ \hline
		 	$b_{k, v}$	& $1$ if node $v$ caches the data from leaf node $k$; otherwise $0$ \\ \hline
		$S_v$ & storage capacity of node $v$ \\ \hline
		 $w_{ca}$ & caching power efficiency  \\  \hline
		$R_k$ & request rate for data from node $k$\\ \hline
		$N$ & total number of nodes in the network \\ \hline
		 $C_v$ &set of leaf nodes that are descendants of node $v$  \\  \hline
		$T$	& time length that data are cached\\ \hline
		 $\phi^u$ & upper bound of the objective function  \\ \hline
		$\mathcal{L}$	& list of  regions \\ \hline
		 $\mathcal{R}$ & any sub-region in $\mathcal{L} $  \\ \hline
		$\phi^{\mathcal{R}, u}$	& upper bound on the objective function in subregion $\mathcal{R}$ \\ \hline
		 $\phi^{\mathcal{R}, l}$	& lower bound on the objective function in subregion $\mathcal{R}$  \\ \hline
	     $\epsilon$	& difference between the upper and lower bound   \\ \hline
	     $w_i^{\mathcal{R},l}$ & lower bound on auxiliary variable $w_i$ in subregion $\mathcal{R}$ \\ \hline
	    $w_i^{\mathcal{R},u}$ & upper bound on auxiliary variable $w_i$ in subregion $\mathcal{R}$ \\ \hline
	    $w_{bc}^{j}$ & $j^{\text{th}}$ candidate variable for branching  \\ \hline
	      $w_{b}$ & chosen branching variable  \\ \hline
	       $w_{b}^\mathcal{r}$& value at which the variable is branched  \\ \hline
	        $\text{bt}$ &  bilinear terms \\ \hline
	        $\text{lft}$ & linear fractional terms \\ \hline
	        $\mathcal{T}_{\text{bt}}$ & set of bilinear terms (\text{bt})\\ \hline
	         	         $\mathcal{T}_{\text{lft}}$ & set of linear fractional terms (\text{lft})\\ \hline 
	\end{tabular}
	\label{tab:notations}
	\vspace{-0.15in}
\end{table}

\section{Energy Optimization}\label{sec:opt}

In this section, we first define the cost function in our model and then formulate the optimization problem. Data produced by every leaf node is received, transmitted, and possibly compressed by all nodes in the path from  the leaf  node to the root node, consuming energy  
\begin{align}
E^{\text{C}}_k= \sum_{i=0}^{h(k)}y_kf(\delta_{k, i})\prod_{m=i+1}^{h(k)}\delta_{k,m},
\label{eq:servingcost}
\end{align}
where $\prod_{m=i}^{j} \delta_{k,m} := 1$ if $i \ge j$.  Equation~\eqref{eq:servingcost} captures one-time\footnote{During every time period  $T$,  data is always pushed towards the sink upon the first request.} energy cost of receiving, compressing and transmitting data $y_k$ from leaf node (level $h(k)$) to the sink node (level $0$).  The amount of data received by any node at level $i$ from leaf node $k$ is $y_k\prod_{m=i+1}^{h(k)}\delta_{k,m}$ due to the compression from level $h(k)$ to $i+1.$  The term $f(\delta_{k,i})$ captures the reception, transmission and compression energy cost for node at level $i$ along the path from leaf node $k$ to the sink node.  

Let $E_k^{\text{R}}$ be the total energy consumed in responding to the subsequent $(R_k-1)$ requests.  We have
\begin{align}
E^{\text{R}}_k&= \sum_{i=0}^{h(k)}y_{k}(R_k-1)\Bigg\{f(\delta_{k, i})\prod_{m=i+1}^{h(k)}\delta_{k,m}\bigg(1-\sum_{j=0}^{i}b_{k,j}\bigg)+ \bigg(\prod_{m=i}^{h(k)}\delta_{k,m}\bigg)b_{k,i}\left(\frac{w_{ca}T}{R_k-1}+\varepsilon_{kT}\right)\Bigg \}.
\label{eq:cachretrievecost}
\end{align}
{Note that the remaining $(R_k-1)$ requests are either served by the leaf node or a cached copy of data $y_k$ at level $i$ for $i=1,\cdots, h(k).$  W.l.o.g., we consider node $v_{k, i}$  at level $i$.  If data $y_k$ is not cached from $v_{k, i}$ up to the sink node (level $0)$, i.e., $b_{k, j}=0$ for $j=0, \cdots, i,$  the cost is {incurred} due to  receiving, transmitting and compressing the data $(R_k-1)$ times, which is captured by the first term in Equation~\eqref{eq:cachretrievecost}, the second term is $0$.  Otherwise, the $(R_k-1)$ requests are served by the cached copy at $v_{k, i}$,  the corresponding caching and transmission cost serving from $v_{k, i}$ are captured by the second term in Equation~\eqref{eq:cachretrievecost}, and the corresponding reception, transmission and compression cost from $v_{k, i-1}$ upto to sink node is captured by the first term.  Note that the first time cost of reception, transmission and compression the data from leaf node to $v_{k, i}$ is {already} captured by Equation~\eqref{eq:servingcost}.  }


{We present a simple but illustrative example to explain the above equations.}
\begin{example}\label{exm1}
	{We consider a network with one leaf node and one sink node, i.e., $k=1$ and $h(k)=1.$ }
	{Then the cost in Equation~(\ref{eq:servingcost}) becomes
		$E_1^C=y_1f(\delta_{1, 0})\delta_{1,1}+y_1f(\delta_{1,1}),$
		where the first and second terms capture the reception, transmission and compression cost for data $y_1$ at sink node and the leaf node, respectively. }
	
	{The cost in Equation~(\ref{eq:cachretrievecost}) is $E_1^R=$
		\begin{small}
			\begin{align*}
			&\underbrace{y_1(R_1-1)\left[f(\delta_{1, 0})\delta_{1,1}(1-b_{1,0})+\delta_{1,0} \delta_{1,1}b_{1,0}\left(\frac{w_{ca}T}{R_1-1}+\varepsilon_{1T}\right)\right]}_{\text{Term $1$}}\nonumber\displaybreak[0]\\
			&{+\underbrace{y_1(R_1-1)\left[f(\delta_{1, 1})(1-b_{1,0}-b_{1,1})+ \delta_{1,1}b_{1,1}\left(\frac{w_{ca}T}{R_1-1}+\varepsilon_{1T}\right)\right]}_{\text{Term $2$}}},
			\end{align*}
		\end{small}
		where $\text{Term $1$}$ and $\text{Term $2$}$ capture the costs at sink node and leaf node, respectively.   To be more specific, there are three cases: (i) data $y_1$ is cached at sink node $0$, i.e., $b_{1, 0}=1$ and $b_{1, 1}=0$ (since we only cache one copy);  (ii) data $y_1$ is cached at leaf node $1$, i.e., $b_{1, 0}=0$ and $b_{1, 1}=1$; and (iii) data $y_1$ is not cached, i.e., $b_{1, 0}=b_{1, 1}=0$.  We consider these three cases in the following. }
	
	{Case (i), i.e., $b_{1, 0}=1$ and $b_{1, 1}=0$,  $\text{Term $2$}$ becomes $0$ and $\text{Term $1$}$ reduces to $y_1(R_1-1)\delta_{1,0} \delta_{1,1}b_{1,0}(\frac{w_{ca}T}{R_1-1}$\newline$ +\varepsilon_{1T})$ since all the $(R_1-1)$ requests are served from sink node. {This indicates that the total energy cost is due to caching the data for time period $T$ and transmitting it $(R_k-1)$ times from the sink node to users that request it. }} 
	
	{Case (ii), i.e., $b_{1, 0}=0$ and $b_{1, 1}=1$,  $\text{Term $1$}$ becomes $y_1(R_1-1)f(\delta_{1, 0})\delta_{1,1}$, which captures the reception, transmission and compression costs at sink node $0$ for serving the $(R_1-1)$ requests.  $\text{Term $2$}$ becomes $y_1(R_1-1) \delta_{1,1}b_{1,1}\left(\frac{w_{ca}T}{R_1-1}+\varepsilon_{1T}\right)$, which captures the cost {of caching data at the leaf node and transmitting the data $(R_k-1)$ times from the cached copy to the sink node} .  The sum of them is the total cost to serve $(R_1-1)$ requests. }
	
	{Case (iii), i.e., $b_{1, 0}=b_{1, 1}=0$, $E_1^R=y_1(R_1-1)f(\delta_{1, 0})\delta_{1,1}+y_1(R_1-1)f(\delta_{1, 1})$, which captures the reception, transmission and compression costs at sink node $0$ and leaf node $1$ for serving the $(R_1-1)$ requests since there is no cached copy in the network.  }

\end{example}

The total energy consumed in the network is $E^{\text{total}}$,
\begin{align}
E^{\text{total}}(\boldsymbol{\delta},\boldsymbol{b})\triangleq {\sum_{k \in \mathcal{K}}}\bigg(E^{\text{C}}_k+E^{\text{R}}_k\bigg),
\label{eq:total}
\end{align}
where $\boldsymbol{\delta}=\{\delta_{k, i}, \forall k \in \mathcal{K}, i=0,\cdots, h(k)\}$ and $\boldsymbol{b}=\{b_{k,i}, \forall k \in \mathcal{K}, i=0,\cdots, h(k)\}$. 
Our objective is to minimize the total energy consumption of the network with a QoI constraint for end users by choosing the compression ratio vector $\boldsymbol \delta$ and caching decision vector $\boldsymbol b$ in the network $G.$ Therefore, the optimization problem is,
\begin{subequations}\label{eq:optimization}
	\begin{align}\small
	\min_{\boldsymbol \delta, \boldsymbol b}\quad &E^{\text{total}}(\boldsymbol{\delta},\boldsymbol{b}) 
	\\
	\text{s.t.} \quad&\sum_{k\in \mathcal{K}}y_{k} \prod_{i=0}^{h(k)}\delta_{k,i}\geq\gamma, \displaybreak[0] \label{con:1}\\
	&\sum_{k \in C_v} b_{k, {h(v)}} y_{k} \prod_{j=h(k)}^{h({v})}\delta_{k,j}\leq S_v, \forall \; v \in V,\label{con:2}\\
	& \sum_{i=0}^{h(k)}b_{k,i} \leq 1,\forall k\in\mathcal{K}, \label{con:3}\\ \displaybreak[3]
	& {0< \delta_{k, i}\leq 1, \forall k \in \mathcal{K},\; i=0, \cdots, h(k),}\label{con:4}\\
	&  b_{k,i}\in\{0,1\}, \forall  k\in\mathcal{K}, i=0, \cdots, h(k), \label{con:5} \displaybreak[1]
	\end{align}
\end{subequations}
{where $h(v)$ is the depth of node $v$ in the tree. }

The first constraint is the QoI constraint, i.e., the total data available at the sink node \cite{nazemi2016qoi}. { The second constraint indicates that our decision (caching) variable  $b_{k,i}$ is binary. }
The {third} constraint is on {total amount of data that} can be cached at each node. 
The {fourth} constraint is that at most one copy of the generated data should be cached on the path between the leaf node and the sink node.

{The optimization problem in \eqref{eq:optimization} is a non-convex MINLP problem with   $M$ continuous variables, the $\delta_{k,i}$'s and $M$ binary variables, the $b_{k,i}$'s where, $M$ = $\sum_{k \in \mathcal{K}} h(k)$. }

\subsection{Properties}
{We first analyze the complexity of the problem given in \eqref{eq:optimization} and show that it is NP-hard.}

\begin{theorem}
	{The optimization problem in~(\ref{eq:optimization}) is NP-hard.}
\end{theorem}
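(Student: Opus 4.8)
The plan is to prove NP-hardness by exhibiting a polynomial-time reduction from the \emph{$0$--$1$ Knapsack problem}, which is well known to be NP-hard in its decision form. The idea is to carve out a special case of \eqref{eq:optimization} in which the continuous compression variables play no role at the optimum, so that the residual problem over the binary caching variables $\boldsymbol b$ is \emph{exactly} a knapsack. Concretely, I would work with the decision version of \eqref{eq:optimization} (``is there a feasible $(\boldsymbol\delta,\boldsymbol b)$ with $E^{\text{total}}\le\theta$?'') and show that a Knapsack instance admits a solution of value at least $v^\star$ if and only if the constructed instance has a feasible point of cost at most a corresponding threshold $\theta$.

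For the construction, given items $1,\dots,n$ with weights $w_k>0$, values $v_k>0$ and capacity $W$, I would build a depth-one tree: a single sink $s$ at level $0$ with $n$ leaves $k=1,\dots,n$ as its direct children, so $h(k)=1$ and $\mathcal H^k=\{s,k\}$. Set the generated data $y_k:=w_k$, the sink storage $S_s:=W$, and the leaf storage $S_k:=0$ for every $k$. By constraint~\eqref{con:2} applied at each leaf $v=k$ (whose only descendant is itself), $S_k=0$ forces $b_{k,1}=0$, so the \emph{only} admissible caching location for item $k$ is the sink, governed by $b_{k,0}$, while constraint~\eqref{con:3} merely lets $b_{k,0}\in\{0,1\}$ freely. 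The sink storage constraint then reads $\sum_k b_{k,0}\,y_k\le S_s$, i.e. $\sum_k b_{k,0}\,w_k\le W$, which is precisely the knapsack capacity constraint. Choosing the compression costs $\varepsilon_{vC}$ large relative to the transmission costs, the penalty $l_v(\delta_v)=1/\delta_v-1$ makes any compression strictly increase $E^{\text{total}}$, so every optimal solution has $\delta_{k,i}=1$; the QoI constraint~\eqref{con:1} then becomes $\sum_k y_k\ge\gamma$ and is satisfied by taking $\gamma\le\sum_k w_k$.

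With $\boldsymbol\delta$ pinned at $\mathbf 1$, Equation~\eqref{eq:servingcost} contributes a constant, and using the case analysis of Example~\ref{exm1} the retrieval cost of each item is reduced by a fixed positive amount $c_k$ precisely when it is cached at the sink (i.e. $b_{k,0}=1$). The free parameters $R_k,\varepsilon_{kR},\varepsilon_{sR},\varepsilon_{sT},w_{ca},T$ supply enough degrees of freedom to set each per-item saving $c_k$ equal to the prescribed value $v_k$ while keeping $y_k=w_k$. Hence minimizing $E^{\text{total}}$ over $\boldsymbol b$ is equivalent to maximizing $\sum_k c_k b_{k,0}=\sum_k v_k b_{k,0}$ subject to $\sum_k w_k b_{k,0}\le W$, which is the knapsack objective. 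Setting $\theta:=(\text{const})-v^\star$ gives the desired equivalence, and since the construction is clearly polynomial, NP-hardness of \eqref{eq:optimization} follows.

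I expect the main obstacle to be the parameter-calibration step: one must verify that the (linear) map sending $(R_k,\varepsilon_{\cdot},w_{ca},T)$ to the required savings $c_k=v_k$, with $y_k=w_k$ fixed, is simultaneously solvable with strictly positive, polynomially bounded parameters, and that compression is genuinely never profitable at these parameters so that the reduction to the pure binary knapsack is exact. A secondary care-point is ruling out that caching at a leaf could ever help; this is handled cleanly by the $S_k=0$ choice, but it is the kind of corner that must be checked. If one instead wants \emph{strong} NP-hardness, the same template reduces from $3$-Partition or Bin Packing using several internal nodes with individual capacities, at the cost of more involved bookkeeping.
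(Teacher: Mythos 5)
Your proposal is correct and follows essentially the same route as the paper: a reduction from $0$--$1$ Knapsack on a depth-one tree in which the sink's cache capacity plays the role of $W$, the data sizes $y_k$ play the role of the weights, and the per-item retrieval saving $c_k$ obtained by caching at the sink is calibrated (via $y_k$, $R_k$, $\varepsilon_{kT}$) to equal the value $v_k$. The one place where the paper is cleaner is exactly the step you flag as your main obstacle: instead of making $\varepsilon_{vC}$ large and arguing globally that compression is never profitable, the paper sets $\gamma=\sum_{k}y_k$, which forces every $\delta_{k,i}=1$ by feasibility of constraint~\eqref{con:1} alone, so the continuous variables drop out without any calibration of the compression costs.
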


\begin{proof}
We prove the hardness by a reduction of any given \emph{ $0-1$ knapsack problem} (KP) to a corresponding instance of \eqref{eq:optimization}. The KP is given as follows:
	\begin{align}\label{eq:bkp}
{	\max}\quad&{\sum_{k=1}^{|\mathcal{K}|} v_k x_k}\nonumber\\
	\text{{s.t.}}\quad&{\sum_{k=1}^{|\mathcal{K}|} w_k x_k\leq W},\nonumber\\
	& {x_k \in \{0,1\},} \quad \forall k \in K. 
	\vspace{-0.12in}
	\end{align}
	where $|\mathcal{K}|$ is the number of items, $x_k$ is a boolean variable that shows whether item $k$ is stored in the knapsack or not, $w_k$ is the weight of the item, $v_k$ is the value of the item and $W$ is the knapsack capacity.
	\par In the corresponding instance of \eqref{eq:optimization}, the processing tree has two levels; that is, the tree has $|\mathcal{K}|$ number of leaf nodes, which can generate data, and the sink node where only the sink node can choose to cache data generated by the leaf nodes. Furthermore, the threshold $\gamma$ is set to $\gamma=\sum_{k \in \mathcal{K}}y_k$ that enforces all the $\delta$'s to be one so that the constraint in \eqref{con:1} can also be satisfied and thus be removed. We also set the cache size at the sink node to be $W$ and the caching decisions of $b_k,\; \forall k\in\{1,\cdots,|\mathcal{K}|\}$ and the data sizes from all leaf node $y_k,\; \forall k\in\{1,\cdots,|\mathcal{K}|\}$ are one-to-one correspondent with $x_k,\; \forall k\in\{1,\cdots,|\mathcal{K}|\}$ and $w_k,\; \forall k\in\{1,\cdots,|\mathcal{K}|\}$ of the knapsack problem, respectively.   The sum $E_k^C$ in \eqref{eq:total} becomes a constant as all $\delta$'s have been set to $1$. As for the $E_k^R$ in \eqref{eq:cachretrievecost} can be reduced to:
	
\begin{align}\label{eq:eqer}
E^{\text{R}}_k&= \sum_{i=0}^{1}\Bigg\{y_{k}(R_k-1)f(1)\Bigg \} -b_{k}\Bigg(\sum_{i=0}^{1}\Bigg\{y_{k}(R_k-1)f(1)\Bigg \}- y_{k}(R_k-1) \left(\frac{w_{ca}T}{R_k-1}+\varepsilon_{kT}\right) \Bigg)
\end{align}	
Note that the first term in the above is a constant and can be removed from the objective function.  By setting the caching cost in the instance of \eqref{eq:optimization} to zero,  (7) becomes:

\begin{align}\label{eq:eqer2}
E^{\text{R}}_k&= -b_{k}\Bigg(y_{k}(R_k-1)\Bigg\{\sum_{i=0}^{1}f(1) -\varepsilon_{kT}\Bigg \}\Bigg)
\end{align}	
For the given value of $v_k$ in the knapsack problem, we choose $y_k$, $R_k$, $\varepsilon_{kT}$ so that Equation \eqref{eq:eqer2} is given by:
\begin{align}\label{eq:eqer3}
E^{\text{R}}_k&= -b_{k}c_k
\end{align}	 
where $c_k = y_{k}(R_k-1)\Big\{\sum_{i=0}^{1}f(1) -\varepsilon_{kT}\Big \} \geq 0$  as $f(1)\geq \varepsilon_{kT}$. As a result, solving the corresponding instance of \eqref{eq:optimization} provides the solution to the knapsack problem.  As the latter is NP-hard, so is \eqref{eq:optimization}.
\end{proof}

\begin{rem}\label{thm:increasingR}
	The objective function $E^{\text{total}}$ defined in~\eqref{eq:optimization} is monotonically increasing in the number of  requests $R_k$ for all $k\in\mathcal{K}$ provided that $\boldsymbol \delta$ and $\boldsymbol b$ are fixed.
\end{rem}
{Notice that \eqref{eq:servingcost} is independent of $R_k$ and \eqref{eq:cachretrievecost} is linear in $R_k$,  and its multipliers are positive. Hence, for any fixed $\boldsymbol{b}$ and $\boldsymbol{\delta}$, \eqref{eq:total} increases monotonically with $R_k$.}

\begin{rem}\label{thm:rootcaching}
	Given a fixed network scenario, if we increase the number of requests $R_k$ for the data generated by leaf node $k,$ then these data will be cached closer to the sink node or at the sink node, if there exists enough cache capacity, to reduce the overall energy consumption. 
\end{rem}
{For fixed $\boldsymbol{\delta}$, observe from \eqref{eq:cachretrievecost} that energy consumption decreases if the cache is moved closer to the root as the nodes deep in the tree do not need to retransmit. }

\subsection{Relaxation of Assumptions}\label{sec:relaxation}
{In our model, we make several assumptions for the sake of simplicity. In the following, we discuss the relaxation of these assumptions.}

{While we assume that the network is structured as a tree, this assumption can be easily relaxed as long as there exists a simple fixed path from each leaf node to the sink node. The tree structure represents a simple topology that captures the key parameters in the optimization formulation without the complexity introduced by a general network topology. 
	Furthermore, for simplicity, we assume that all parameters across the nodes are identical, which is not necessary as seen from the cost function.  We also assume that only leaf nodes generate data. However, our model can be extended to allow intermediate nodes to generate data at the cost of added complexity. 
}

\section{Variant of Spatial Branch-and-Bound Algorithm}\label{sec:sBNB}
In this section, we present a variant of the Spatial Brand-and-Bound algorithm (V-SBB).  Instead of solving the MINLP problem~(\ref{eq:optimization}) directly, we use V-SBB to solve a \emph{standard form} of the original MINLP.   We first introduce the \emph{Symbolic Reformulation}\cite{smith1996global} method that reformulates the MINLP~(\ref{eq:optimization}) into a standard form needed by V-SBB.
\begin{definition}
A MINLP problem is said to be in a \emph{standard form} if it can be written as 
\begin{align}
\min_{\boldsymbol w}\quad &\boldsymbol w_{\text{obj}}\nonumber \displaybreak[0]\\
\text{s.t.} \quad&\mathbf{Aw=b}, \nonumber\displaybreak[1]\\
&\mathbf{w}^l\leq \mathbf{w}\leq \mathbf{w}^U,  \nonumber\displaybreak[2]\\
& \mathbf{w}_k \equiv \mathbf{w}_i \mathbf{w}_j,\quad \; \forall(i,j,k) \;\in \; \mathcal{T}_{\text{bt}},  \nonumber\displaybreak[3]\\
& \mathbf{w}_k \equiv \mathbf{w}_i/\mathbf{w}_j,\quad  \; \forall(i,j,k) \;\in \; \mathcal{T}_{\text{lft}},
\label{eq:genericre}
\end{align}
where the vector of variables $\boldsymbol w$ consists of continuous and discrete variables in the original MINLP.  The sets $\tau_{\text{bt}}$ and $\tau_{\text{lft}}$ contain all relationships that arise in the reformulation. $\boldsymbol A$ and $\boldsymbol b$ are a matrix and a vector of real coefficients, respectively.  The index $\text{obj}$ denotes the position of a single variable corresponding to the objective function value within the vector $\boldsymbol w.$ 
\end{definition}

\begin{theorem}
	The non-convex MINLP problem~(\ref{eq:optimization}) can be  transformed into a standard form.
\end{theorem}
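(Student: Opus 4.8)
The plan is to give a constructive proof by applying the symbolic reformulation of \cite{smith1996global} to problem~\eqref{eq:optimization}, showing that every nonlinearity appearing in the objective and constraints reduces, after the introduction of auxiliary variables, to one of only two admissible types: a bilinear defining constraint $\mathbf{w}_k\equiv\mathbf{w}_i\mathbf{w}_j$ or a linear-fractional one $\mathbf{w}_k\equiv\mathbf{w}_i/\mathbf{w}_j$. The first observation I would record is that all nonlinear operations in~\eqref{eq:optimization} are built solely from finite sums, products of variables, and reciprocals. Indeed, the product terms $\prod_{m}\delta_{k,m}$ appearing in \eqref{eq:servingcost}, \eqref{eq:cachretrievecost}, \eqref{con:1} and \eqref{con:2}, the binary-times-continuous products $b_{k,i}\,\delta_{k,j}$ in \eqref{eq:cachretrievecost} and \eqref{con:2}, and the reciprocal $1/\delta_v$ hidden inside $l_v(\delta_v)=1/\delta_v-1$ (hence inside $f(\delta_v)$) are the only sources of nonconvexity. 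No transcendental functions appear, so the reformulation can stay entirely within the bilinear/linear-fractional vocabulary of \eqref{eq:genericre}.

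First I would lift the objective into the constraint set: introduce a scalar variable $w_{\text{obj}}$, add the defining equality $w_{\text{obj}}=E^{\text{total}}(\boldsymbol{\delta},\boldsymbol{b})$, and replace the objective by $\min w_{\text{obj}}$, matching the linear objective of \eqref{eq:genericre}. Next I would traverse the expression tree of $E^{\text{total}}$ and of each constraint, recursively substituting a fresh variable for every elementary nonlinear operation. Concretely, each reciprocal $1/\delta_{k,i}$ is replaced by a new variable tied to the pair $(1,\delta_{k,i})$ through a linear-fractional defining constraint, contributing an index triple to $\mathcal{T}_{\text{lft}}$; each multi-way product $\prod_{m=i+1}^{h(k)}\delta_{k,m}$ is unrolled into a chain of pairwise products, each captured by a bilinear defining constraint and an index triple in $\mathcal{T}_{\text{bt}}$; and each remaining product of two subexpressions (for instance $f(\delta_{k,i})$ times a product variable, or a binary variable times a continuous one) is likewise represented by a single bilinear auxiliary variable. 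Since each substitution strictly decreases the number of live nonlinear operators in the residual expressions, the recursion terminates after finitely many steps, leaving only linear combinations of the original and auxiliary variables.

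After this substitution, every equality and inequality of \eqref{eq:optimization} is linear in the enlarged variable vector $\boldsymbol w$. The inequalities \eqref{con:1}, \eqref{con:2} and \eqref{con:3} are brought to the equality form $\mathbf{Aw}=\mathbf{b}$ by adjoining nonnegative slack variables, while the box constraint $0<\delta_{k,i}\le 1$ in \eqref{con:4} together with $b_{k,i}\in\{0,1\}$ in \eqref{con:5} supply the bounds $\mathbf{w}^l\le\mathbf{w}\le\mathbf{w}^U$. Finite bounds for each auxiliary variable are then propagated from these by interval arithmetic: a product of factors in $(0,1]$ again lies in $(0,1]$, while the reciprocal variable inherits an upper bound $1/\delta^l_{k,i}$ from the strictly positive lower bound on $\delta_{k,i}$ in \eqref{con:4}. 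Collecting the linear equalities, the bounds, and the two sets $\mathcal{T}_{\text{bt}}$ and $\mathcal{T}_{\text{lft}}$ yields exactly the standard form \eqref{eq:genericre}, establishing the claim.

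I expect the main obstacle to be careful bookkeeping rather than any conceptual difficulty: the deeply nested terms of $E^{\text{R}}_k$ in \eqref{eq:cachretrievecost}—where a fractional factor $f(\delta_{k,i})$ multiplies a product of compression ratios which in turn multiplies the affine binary expression $(1-\sum_{j}b_{k,j})$—must be decomposed layer by layer, introducing an intermediate variable at each level and verifying that no product of three or more live factors is ever left unreduced. Equal care is needed to confirm that the only division introduced is $1/\delta_{k,i}$, so that $\mathcal{T}_{\text{lft}}$ contains only genuinely linear-fractional triples; this hinges on the specific candidate $l_v(\delta_v)=1/\delta_v-1$, and a different (though still factorable) computation-energy model would simply require its own elementary decomposition.
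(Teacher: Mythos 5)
Your proposal is correct and follows essentially the same route as the paper: both apply the symbolic reformulation of \cite{smith1996global} constructively, traversing the expression trees of the objective and constraints, introducing auxiliary variables only for bilinear and linear-fractional terms (the sole nonlinearities, given $l_v(\delta_v)=1/\delta_v-1$), and arriving at the standard form that the paper displays explicitly in its appendix. The only cosmetic difference is that the paper folds terms like $\bigl(\prod_m\delta_{k,m}\bigr)/\delta_{k,i}$ directly into a single linear-fractional auxiliary variable, whereas you first introduce the reciprocal $1/\delta_{k,i}$ and then a bilinear product; both decompositions stay within the admissible vocabulary of the standard form.
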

Due to space constraints, we relegate detailed reformulations (see Appendix \ref{sec:appendixc} for details of symbolic reformulation) and standard form of~(\ref{eq:optimization}) to Appendix \ref{sec:appendixd}. 

Here, we give an example to illustrate the above reformulation process.

\begin{example}
	\normalfont
	{Consider the same network in Example~\ref{exm1}, 
		the non-convex MINLP problem becomes }
	\begin{align}\label{eq:opt-example2nodes}
	\min_{\boldsymbol \delta, \boldsymbol b}\quad &E^{\text{total}}(\boldsymbol{\delta},\boldsymbol{b})=E_1^C+E_1^R\nonumber\\
	\text{s.t.} \quad& y_1\delta_{1,0}\delta_{1,1}\geq\gamma, \nonumber\\
	&b_{1,0}, b_{1,1}\in\{0, 1\}, \nonumber\\
	&b_{1,0}y_1\delta_{1,0}\delta_{1,1}\leq S_0,\nonumber\\
	&b_{1,1}y_1\delta_{1,1}\leq S_1,\nonumber\\
	&b_{1,0}+b_{1,1}\leq 1.
	\end{align}
	$\delta_{1, 0} \delta_{1, 1}$ is a bilinear term.  Based on symbolic reformulation rules, a new bilinear auxiliary variable $w_{1,0}^{\text{bt}}$ needs to be added.   
	The first constraint in \eqref{eq:opt-example2nodes} is then transformed into $ y_{1} w_{1,0}^{\text{bt}} \geq \gamma,$ which is linear in auxiliary variable $w_{1,0}^{\text{bt}}$. 
	Similarly, we add $w_{1,0}^{\text{lft}}$ for linear-fractional term $\delta_{1, 1}/\delta_{1, 0}$ that appears in $f(\cdot).$ 
	$b_{1,0}\delta_{1, 0}\delta_{1, 1}$ in the third constraint of \eqref{eq:opt-example2nodes} is a tri-linear term.  Since $\delta_{1, 0}\delta_{1, 1}$ is replaced by $w_{1,0}^{\text{bt}}$, we obtain a bilinear term $b_{1,0}w_{1,0}^{\text{bt}}$.  Again,  based on symbolic reformulation rules, $b_{1,0}w_{1,0}^{\text{bt}}$ is replaced by a new auxiliary variable $\overline{w}_{1,0}^{\text{bt}}$. 
	Similarly we add new auxiliary variables $\tilde{w}_{1,1}^{\text{b}}, \tilde{w}_{1,0}^{\text{bt}},$, $\tilde{w}_{1,0}^{\text{lft}}$ and {$\tilde{w}_{1,1}^{\text{lft}}$}.  The objective function in~(\ref{eq:opt-example2nodes}) can be then expressed as a function of these new auxiliary variables.  Therefore, the standard form of \eqref{eq:opt-example2nodes} is 
		\begin{align}
		\min_{\boldsymbol \delta, \boldsymbol b}\quad & w_{\text{obj}}
		\nonumber\\
		\text{s.t.} \quad& y_{1} w_{1,0}^{\text{bt}} \geq \gamma, \nonumber\\
		&  b_{1,0}, b_{1, 1}\in\{0,1\},  \nonumber\\
		& y_{1} \overline{w}_{1,0}^{\text{bt}}\leq S_0, \nonumber\\
		& y_{1} \tilde{w}_{1,1}^{\text{bt}}\leq S_1, \nonumber\\
		& b_{1,0}+b_{1, 1}\leq 1, \nonumber\\
		& w_{1,0}^{\text{bt}} = \delta_{1, 1}\times \delta_{1,0}, \nonumber\\
		& w_{1,0}^{\text{lft}} = \delta_{1, 1}/\delta_{1,0}, \nonumber\displaybreak[0]\\
		& \overline{w}_{1,0}^{\text{bt}} = b_{1,0} \times w_{1,0}^b, \nonumber\\
		& \tilde{w}_{1,1}^{\text{bt}} = b_{1, 1} \times \delta_{1, 1}, \nonumber\\
		&  \tilde{w}_{1,0}^{\text{bt}}= \delta_{1, 1}\times b_{1,0},\nonumber \\
		&  \tilde{w}_{1,0}^{\text{lft}}=  b_{1,0}/\delta_{1, 1},\nonumber\\
		&  \overline{w}_{1,0}^{\text{lft}}=  b_{1,0}w_{1,0}^{\text{lft}},\nonumber\\
		&  \tilde{w}_{1,1}^{\text{lft}}=  b_{1,1}/\delta_{1, 1},\nonumber\\
		& w_{\text{obj}} = y_1 \varepsilon_{1R} \delta_{1,1}+\varepsilon_{1T}y_1 w_{1,0}^{\text{bt}}+y_1 \varepsilon_{1C} w_{1,0}^{\text{lft}}-y_1 \varepsilon_{1C} \delta_{1, 1}+y_1 \varepsilon_{1R} \varepsilon_{1T} y_1 \delta_{1, 1} +y_1 \varepsilon_{1C}/\delta_{1,1}-y_1 \varepsilon_{1C}+ \nonumber\\
		& y_1(R_1-1)\bigg(\varepsilon_{1R}\delta_{1, 1}+\varepsilon_{1T}w_{1,0}^{\text{bt}}-\varepsilon_{1C}\delta_{1, 1}+\varepsilon_{1C}w_{1,0}^{\text{lft}}-\varepsilon_{1R}\tilde{w}_{1,0}^{\text{bt}}-\varepsilon_{1T}\overline{w}_{10}^{\text{bt}}+\varepsilon_{1C}\tilde{w}_{1,0}^{\text{bt}}-\varepsilon_{1C}\overline{w}_{1,0}^{\text{lft}}\bigg)+\nonumber \\
		& y_1(R_1-1)\varepsilon_{1T} + y_1 w_{ca}T\overline{w}_{1,0}^{\text{bt}}+y_1(R_1-1)\bigg(\varepsilon_{1R}- \varepsilon_{1C}+ \varepsilon_{1T}\delta_{1, 1}+\varepsilon_{1C}/\delta_{1, 1}-\varepsilon_{1R}b_{1,0}-\varepsilon_{1C}b_{1,0}-\nonumber \\
		&\varepsilon_{1T}\tilde{w}_{1,0}^{\text{bt}}-\varepsilon_{1C}\tilde{w}_{1,0}^{\text{lft}}-\varepsilon_{1R}b_{1,1}+\varepsilon_{1C}b_{1,1}-\varepsilon_{1T}\tilde{w}_{1,1}^{\text{bt}}-\varepsilon_{1C}\tilde{w}_{1,1}^{\text{lft}}\bigg).
			\label{eq:example2nodes}
		\end{align}
\end{example}
Through this reformulation, the non-convex and non-linear terms in the original problem are transformed into bilinear and linear fractional terms, which can be easily used to compute the lower bound of each region in V-SBB, which are discussed in details later.  This is the reason V-SBB requires reformulating the original problem into a standard form.

\begin{theorem}
	Reformulated problem and the original MINLP are equivalent.
\end{theorem}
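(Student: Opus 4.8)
The plan is to prove equivalence by exhibiting a bijection between the feasible set of the original MINLP~\eqref{eq:optimization} and the feasible set of its standard form~\eqref{eq:genericre} that preserves the objective value. Write $(\boldsymbol{\delta},\boldsymbol{b})$ for the original variables and $\boldsymbol{w}$ for the full variable vector of the standard form, which contains the original variables together with one fresh auxiliary variable for each bilinear term (indexed by $\mathcal{T}_{\text{bt}}$) and each linear-fractional term (indexed by $\mathcal{T}_{\text{lft}}$). Equivalence will follow once I show the two problems share the same optimal value and that optimal solutions correspond under this map.

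First I would establish the \emph{lifting map}. Given any feasible $(\boldsymbol{\delta},\boldsymbol{b})$ for~\eqref{eq:optimization}, I define each auxiliary variable by evaluating its defining relation: $w_k = w_i w_j$ for every $(i,j,k)\in\mathcal{T}_{\text{bt}}$ and $w_k = w_i/w_j$ for every $(i,j,k)\in\mathcal{T}_{\text{lft}}$. Because $\delta_{k,i}>0$ by constraint~\eqref{con:4}, no denominator vanishes and the fractional terms are well defined. I would then verify that the resulting $\boldsymbol{w}$ satisfies all constraints of~\eqref{eq:genericre}: the linear system $\mathbf{Aw}=\mathbf{b}$ together with the box constraints reproduce exactly the QoI, cache-capacity, single-copy and integrality constraints of~\eqref{eq:optimization} once the nonlinear monomials are renamed by their auxiliary variables, while the defining relations hold by construction. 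Since the symbolic reformulation sets $w_{\text{obj}}$ equal to the renamed form of $E^{\text{total}}(\boldsymbol{\delta},\boldsymbol{b})$, the objective value is preserved.

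Next I would establish the \emph{projection map}: drop all auxiliary coordinates from a feasible $\boldsymbol{w}$ of~\eqref{eq:genericre}, retaining only $(\boldsymbol{\delta},\boldsymbol{b})$. The crux is that the equalities defining the auxiliary variables are \emph{exact} identities rather than relaxations --- each forces $w_k$ to equal the corresponding product or quotient of its arguments --- so substituting these identities back into $\mathbf{Aw}=\mathbf{b}$ and into the expression for $w_{\text{obj}}$ recovers precisely the original nonlinear constraints and objective. Hence the projected point is feasible for~\eqref{eq:optimization} with the same objective value, and the two maps are mutually inverse on the original coordinates. Equality of the two optimal values, and thus equivalence, follows immediately.

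The main obstacle is showing that the lifting map is \emph{well defined}, i.e., that the recursive substitution terminates and is unambiguous. An auxiliary variable may be defined through earlier auxiliary variables --- for instance, in the worked example $\overline{w}_{1,0}^{\text{bt}} = b_{1,0}\, w_{1,0}^{\text{bt}}$ uses the previously introduced $w_{1,0}^{\text{bt}}$ --- so I must argue that the dependency relation among auxiliary variables is acyclic and therefore admits a topological order in which every $w_k$ is fixed before it is used. This holds because each reformulation step replaces a strict subexpression by a fresh variable, so the arguments $w_i,w_j$ of any defining relation always have strictly smaller expression depth than $w_k$, ruling out cycles. A secondary point to check is that the precomputed bounds $\mathbf{w}^{l}\le\mathbf{w}\le\mathbf{w}^{U}$ on the auxiliary variables, obtained by interval arithmetic from the bounds on $\boldsymbol{\delta}$ and $\boldsymbol{b}$, are valid --- they never exclude a value produced by the lifting map --- so that lifting lands inside the box and no genuinely feasible original point is lost.
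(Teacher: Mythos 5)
Your proof is correct: the lifting/projection bijection between feasible points that preserves the objective value, together with the acyclicity of the auxiliary-variable dependency graph (which makes the lifting well defined) and the validity of the interval-arithmetic bounds, is exactly the standard argument for the exactness of symbolic reformulation. The paper itself does not prove this theorem but simply cites Smith and Pantelides \cite{smith1996global}, and your sketch reproduces the substance of that reference's argument, so the two approaches coincide.
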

Proof is available in Section $2$ (page $460$) \cite{smith1996global}.

Due to the reformulation, the number of variables in the reformulated problem is larger than in the original MINLP. In the following, we show that the number of auxiliary variables that arise from symbolic reformulation is bounded. 
\begin{rem}
	\label{thm:numberofvariables}
	The number of auxiliary variables in the symbolic reformulation is $O(n^2),$ where $n=2M$ is the number of variables in the original  formulation.
\end{rem}
From \cite{smith1999symbolic}, a way to transform a general form optimization problem 
into a standard form \eqref{eq:genericre} is through basic arithmetic operations on original variables. To be more specific, any algebraic expression results from the basic operators including the five basic binary operators, i.e., addition, subtraction, multiplication, division and exponentiation, and the unary operators, i.e., logarithms etc.  Therefore, in order to construct a standard problem consisting of simple terms corresponding to these binary or unary operations, new variables need to be added corresponding to these operations.  From the symbolic reformulation process \cite{smith1999symbolic, smith1996optimal, liberti2004reformulation}, any added variable results from the basic operations between two (including possibly the same) original variables or added variables. Hence, based on the basic operations, there are at most $n^2$ combinations of these variables, given that there are $n$ variables in the original problem \eqref{eq:optimization}. Therefore, the number of added variables in the symbolic reformulation is bounded as $O(n^2).$
In the remainder of this section, we present the V-SBB to solve the equivalent problem. 
\begin{algorithm}
	\begin{algorithmic}[]
		\State \textbf{Step 1}: Initialize $\phi^u:=\infty$ and $\mathcal{L}$ to a single domain  
		\State \textbf{Step 2}: Choose a subregion $\mathcal{R} \in \mathcal{L}$ using \emph{least lower bound rule}
		\If{ $\mathcal{L}$ = $\emptyset$} Go to Step 6
		\EndIf
		\If{ for chosen region $\mathcal{R}$,  $\phi^{\mathcal{R},l}$ is infeasible or $\phi^{\mathcal{R},l}\geq \phi^u-\epsilon$} Go to Step 5
		\EndIf
		
		
		\State \textbf{Step 3}: Obtain the upper bound $\phi^{\mathcal{R},u}$ 
		\If{upper bound cannot be obtained or if $\phi^{\mathcal{R},u}>\phi^u$} Go to Step 4
		\Else{ $\phi^u:=$$\phi^{\mathcal{R},u}$} and, from the list $\mathcal{L}$, delete all subregions $\mathcal{S} \in \mathcal{L}$ such that $\phi^{\mathcal{S},l}\geq \phi^u-\epsilon$
		\EndIf
		\If{$\phi^{\mathcal{R},u}-\phi^{\mathcal{R},l}\leq \epsilon$} Go to Step 5
		\EndIf
		\State \textbf{Step 4}: Partition  $\mathcal{R}$ into new subregions  $\mathcal{R}_{\text{right}}$ and  $\mathcal{R}_{\text{left}}$
		\State \textbf{Step 5}: Delete $\mathcal{R}$ from $\mathcal{L}$ and go to Step 2
		\State \textbf{Step 6}: Terminate Search
		\If{$\phi^u=\infty$}  Problem is infeasible
		\Else { $\phi^u$ is  $\epsilon$-global optimal}
		\EndIf
	\end{algorithmic}
	\caption{Variant of Spatial Branch-and-Bound (V-SBB)}
	\label{algo:sbnb}
\end{algorithm}

\subsection{Our Variant of Spatial Branch-and-Bound}
The proposed spatial branch-and-bound method is a variant of the method proposed in \cite{smith1999symbolic} and is primarily tuned for solving our optimization problem \eqref{eq:exactproblem} that is also the solution of \eqref{eq:optimization}.  Our algorithm is  different from \cite{smith1999symbolic} because
	\begin{itemize}
		\item We do not use any bounds tightening steps as it does not always guarantee faster convergence \cite{liberti2004phdthesis} and in case of our problem slowed down the process. 
		\item   By eliminating the bounds tightening step, we do not need to calculate the lower bound $\phi^{\mathcal{R},l}$ again separately and utilize the lower bound obtained in Step 2 for the chosen region $\mathcal{R}$, hence reducing the computational complexity of the algorithm.
	\end{itemize}
	
 Algorithm \ref{algo:sbnb} provides an overview of the steps involved in spatial branch-and-bound algorithm. We describe some of the steps in Algorithm \ref{algo:sbnb} in detail below. 

\noindent{\textit{Step $2$:}} There are a number of approaches that can be used to choose a subregion $\mathcal{R}$ from $\mathcal{L}$ \cite{floudas2013deterministic}. Here we use the \emph{least lower bound rule}, i.e., we choose a subregion $\mathcal{R} \in \mathcal{L}$ that has the lowest lower bound among all the subregions, since it is a widely used and well researched method. The lower bound can be obtained by solving a convex relaxation of the problem in \eqref{eq:exactproblem}.  As our optimization problem in \eqref{eq:optimization} and \eqref{eq:exactproblem} contains only bilinear and linear fractional terms, we use McCormick  linear over-estimators and under-estimators \cite{mccormick1976computability} (see Appendix \ref{sec:appendixA}) to obtain a convex relaxation of all such terms. The resulting problem is then a Mixed Integer Linear Programming (MILP) problem that we solve using the \emph{SCIP} solver \cite{achterberg2009scip}. The SCIP solver is a faster and well known solver for MILP problems.  The subregion with lowest lower bound is then used as the region to explore for an optimum. The chosen regions' lower bound is used as $\phi^{\mathcal{R},l}$. If the convex relaxation is infeasible or if the obtained lower bound is higher than the existing upper bound $\phi^u$ of the problem, we \emph{fathom} or delete the current region by moving to step $5$.\\
\noindent{\textit{Step $3$:}} In step $3$, we calculate the upper bound  $\phi^{\mathcal{R},u}$  for the subregion $\mathcal{R}$ chosen in Step $2$. This can be done in a number of ways (see \cite{smith1999symbolic}), here we use local MINLP solver such as \emph{Bonmin} \cite{bonami2008algorithmic} to obtain a local minimum for the subregion as it performed better in terms of time than using local non-linear programming optimization with fixed discrete values or added discreteness constraints in our simulation settings. If the upper bound for the region $\phi^{\mathcal{R},u}$ cannot be obtained or if it is greater than $\phi^{u}$ then we move to Step $4$ to further divide the region and search further for a better solution. Otherwise we set it as the current best solution $\phi^{u}$ and delete all the subregions whose lower bound is greater than the obtained upper bound since all such regions cannot contain the $\epsilon$-global optimal solution. If the difference between the upper and lower bound for the region is within the $\varepsilon$-tolerance, the current subregion need not to be searched further, then we delete the current subregion by going to step $5$, otherwise we move to step $4$ for further searching in the space. \\
\noindent{\textit{Step $4$:}} 
Step $4$ also known as the branching/partitioning step 
 helps in partitioning/dividing a region to further refine the search for solution. In branching step, we select a variable for branching/partitioning as well as the value of the variable at which the region is to be divided. There are a number of different rules and techniques that can be used for branching (see \cite{floudas2013deterministic} for detailed discussion). Here we use the variable selection and value selection rule specified in \cite{smith1996optimal}, since it has been found efficient for our problem \cite{smith1996optimal}. 

We branch on the variable that causes the maximum reduction in the feasibility gap between the solution of convex relaxation (solution of Step 2) and the exact problem. To do so, the approximation error for the bilinear and linear fractional terms in \eqref{eq:exactproblem} is calculated using \eqref{eq:aperrorbt} and \eqref{eq:aperrorlft} respectively where $\mathcal{S}_2$ means the value of the variable obtained in Step 2. The variable with the maximum approximation error of all is chosen as the branching variable as that tightens the gap between  the relaxation and the exact problem \cite{smith1996optimal}. This results in two candidate variables for branching i.e. $w_i$ and $w_j$. If one of the variables is discrete (binary in our case) and the other is continuous then choose the discrete variable since it will result only in finite number of branches. However, if both variables are of the same type (either binary or continuous), then the branching variable is chosen using \eqref{eq:variableselection} i.e. we choose the variable $w_b$ that has its value $w_b^\mathcal{R}$ closer to its range's midpoint. However,  we first need to obtain the branching value for the candidate variables $w_{bc}^\mathcal{r}$ (the value at which to branch).  $w_{bc}^\mathcal{r}$ should be between the upper and lower bounds of the variable in the region i.e. $w_{bc}^{\mathcal{r},l} < w_{bc}^\mathcal{r} < w_{bc}^{\mathcal{r},u}$. The rules for the choice of the branch point have been set in \cite{smith1996optimal}, however we restate them here for sake of completeness. 
\begin{itemize}
	\item Set $w_{bc}^\mathcal{r}$ to the value obtained in Step $2$, i.e., $w_{bc}^\mathcal{r}:=w_{bc}^{\mathcal{S}_2}.$
	\item If any feasible upper bound $\phi^u=\phi(w_{bc}^*)$ has been obtained and  $w_{bc}^{\mathcal{r},l} < w_{bc}^* < w_{bc}^{\mathcal{r},u}$, then  $w_{bc}^\mathcal{r}:=w_{bc}^*$ and stop the search for the value.
	\item If step $4$ provided an upper bound $\phi^{\mathcal{R},u}$ for the subregion $\mathcal{R}$, then  $w_{bc}^\mathcal{r}:=w_{bc}^\mathcal{R}$.
\end{itemize}
After obtaining the branch point value, we have all the parameters required for \eqref{eq:variableselection} and can then choose the variable for branching.
	\begin{subequations}
	\begin{align}
	&E^{ijk}_{\text{bt}}=|w_k^{\mathcal{S}_2}-w_i^{\mathcal{S}_2}w_j^{\mathcal{S}_2}|\; \forall(i,j,k) \;\in \; \mathcal{T}_{\text{bt}} \displaybreak[0] \label{eq:aperrorbt}\\
	&E^{ijk}_{\text{lft}}=\bigg|w_k^{\mathcal{S}_2}-\frac{w_i^{\mathcal{S}_2}}{w_j^{\mathcal{S}_2}}\bigg|\; \forall(i,j,k) \;\in \; \mathcal{T}_{\text{lft}} \label{eq:aperrorlft}
	\end{align}
	\end{subequations}
		\begin{align}
	w_{b}=\arg \; \min\;\Bigg\{\bigg|0.5-\bigg(\frac{w_i^\mathcal{r}-w_i^{\mathcal{R},l}}{w_i^{\mathcal{R},u}-w_i^{\mathcal{R},l}}\bigg)\bigg|, 
	\bigg|0.5-\bigg(\frac{w_j^\mathcal{r}-w_j^{\mathcal{R},l}}{w_j^{\mathcal{R},u}-w_j^{\mathcal{R},l}}\bigg)\bigg| \Bigg\}
	\label{eq:variableselection}
	\end{align}
\par We  partition the subregion $\mathcal{R}$ into $\mathcal{R}_{\text{right}}$ and $\mathcal{R}_{\text{left}}$ and add $\mathcal{R}_{\text{right}}$, $\mathcal{R}_{\text{left}}$ into our region list $\mathcal{L}$. Then we move to \emph{Step $5$} and delete the subregion $\mathcal{R}$ from the list $\mathcal{L}$.

\subsection{Convergence of Spatial Branch-and-Bound}
The spatial branch-and-bound method guarantees convergence to $\epsilon$-global optimality, which has been proven in \cite{liberti2004phdthesis}. However, for sake of completeness, we restate the proof in the Appendix \ref{sec:appendixB}.

\section{Evaluation}\label{sec:results}
\begin{figure}
	\centering
	\includegraphics[width=0.45\textwidth]{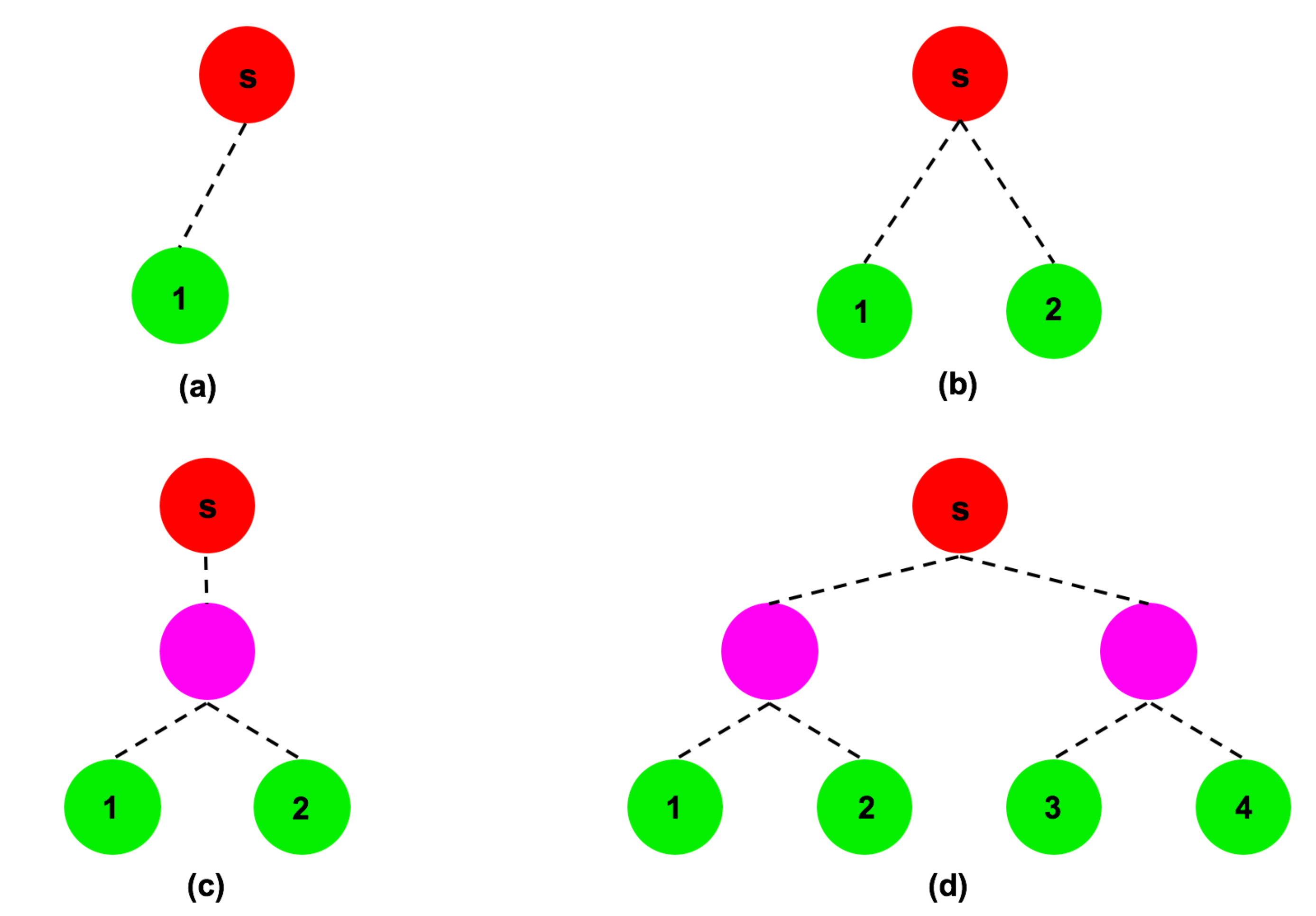}
	\vspace{-0.1in}
	\caption{Candidate network topologies used in the experiments: (a) one sink node and one leaf node; (b) one sink node and two leaf nodes; (c) one sink node, one intermediate node and two leaf nodes; and (d) one sink node, two intermediate nodes and four leaf nodes.}
	\protect\label{fig:networks}
\end{figure}
{We evaluate the performance of  our communication, compression and caching (C$3$) joint optimization framework through a series of experiments on several network topologies as shown in Figure~\ref{fig:networks}. Our goal is to analyze the performance of C$3$ and assess the improvement in energy efficiency that can be achieved by jointly considering C$3$ costs when compared with C$2$. While highlight the performance gain is valuable, characterizing the performance of C$3$ in different settings and parameters, and obtaining the optimal caching location and data compression rate is also of great significance. We also compare the performance of our  V-SBB algorithm with some other well-known solvers. }

 The highlights of the evaluation results are:
\begin{itemize}
		\item {Our C$3$ joint optimization framework improves energy efficiency by as much as $88\%$ compared to the C$2$ optimization over communication and computation, or communication and caching. This shows the significance of jointly considering C$3$ energy costs.   }
		\item {The improvement in energy efficiency with C$3$ framework increases with an increase in the number of requests and the network size. Furthermore, data of nodes that had largest number of requests $R_k$'s are cached at the sink node or closer to the sink node. }
	\item {While comparing different MINLP solvers, } V-SBB algorithm can obtain an $\epsilon$-global optimal solution in most situations. {We vary the network parameters and find that V-SBB is able to obtain a feasible solution in all settings. SCIP, Baron, Bonmin and Antigone are faster in obtaining solutions. However, they are either not able to obtain solutions in all the settings or they provide an objective value higher than our algorithm particularly for lower values of $\gamma$.  }
\end{itemize}

\vspace{-0.2in}
\subsection{Methodology}
{Our primary goal is to highlight the improvement in energy efficiency that is achieved using the C$3$ framework when compared with C$2$.  We define the energy efficiency as:} 
\begin{align}\label{eq:efficiency}
\mathcal{E}=\frac{E^{\text{total}*}(\text{C}2)-E^{\text{total}*}(\text{C}3)}{E^{\text{total}*}(\text{C}2)}\times 100\%,
\end{align}
where $E^{\text{total}*}(\text{C}3)$ and $E^{\text{total}*}(\text{C}2)$ are the optimal energy costs under the C$3$ optimization framework in~\eqref{eq:optimization} and the C$2$ optimization, respectively.  $\mathcal{E}$ reflects the reduction of energy efficiency for the C$3$ over the C$2$ optimization. 
{While, the increase in energy efficiency using C$3$ framework is noteworthy, characterizing the magnitude of the improvement and the parameters that significantly impact the energy efficiency is important. Such characterization can help in identifying the \emph{operation regions} for the network and then accordingly devising heuristic algorithms for specific operation regions. We also compare the performance of V-SBB with other MINLP solvers and show that it performs comparably with other MINLP solvers for our C$3$ framework. }

\begin{table}[ht]
	\centering
	\caption{Characteristics of the solvers used in this paper}
	\vspace{-0.1in}
	\begin{tabular}{|l|p{13.0cm}|}
		\hline
		\textbf{Solver} & \textbf{Characteristics}  \\ \hline
		\textbf{Bonmin} \cite{bonami2008algorithmic} &  A deterministic approach based on Branch-and-Cut method that solves relaxation problem with Interior Point Optimization tool (IPOPT), as well as mixed integer problem with Coin or Branch and Cut (CBC).  \\ \hline
		\textbf{NOMAD} \cite{le2011algorithm}  & A stochastic approach based on Mesh Adaptive Direct Search Algorithm (MADS) that guarantees local optimality. It can be used to solve non-convex MINLP and has a relatively good performance.\\ \hline
		\textbf{GA} \cite{deb2002fast} &  A meta-heuristic stochastic approach that can be tuned to solve global optimization problems. We use Matlab \emph{Optimization Toolbox}'s implementation.      \\ \hline
			\textbf{{SCIP\cite{achterberg2009scip}}}  &  {One of the fastest, non-commercial, deterministic global optimization solver that uses branch-and-bound algorithm for solving MINLP problems.}     \\ \hline
			\textbf{{Baron\cite{tawarmalani2005polyhedral}}}  &  {A deterministic global solver for MINLP problems that relies on Branch and Cut approach for solving MINLP problems.}      \\ \hline
			\textbf{{Antigone\cite{misener2014antigone}}}  &  {A deterministic global solver for MINLP problems that relies on special structure of the problem and uses Branch and Cut approach to solve the problem.}      \\ \hline
	\end{tabular}
	\label{tab:Solvercharacteristics}	
\end{table}



\noindent{\textbf{\textit{Setup:}}} 
We implement V-SBB in Matlab on a Core i$7$ $3.40$ GHz CPU with $16$ GB RAM.  The candidate MINLP solvers in this work include Bonmin, NOMAD and GA, which are implemented with Opti-Toolbox \cite{optitoolbox}. We summarize the characteristics of these solvers in Table~\ref{tab:Solvercharacteristics}.  Note that these solvers can be applied directly to solve the original optimization problem in \eqref{eq:optimization}, while our V-SBB solves the equivalent problem. 
The reformulations needed are executed by a Java based module and we derive the bounds on the auxiliary variables. 
We also relax the integer constraint in \eqref{eq:optimization} to obtain a non-linear programming problem, which is solved by IPOPT \cite{wachter06} and use it as a benchmark for comparison.  V-SBB terminates when $\epsilon$-optimality is obtained or a computation timer of {$400$} seconds expires.  We take $\epsilon=0.001$ in our study.  If the timer expires, the last feasible solution is taken as the best solution.  {For cases, where no solution is obtained within the specified timer, we increase the timer limit to $7200$ seconds.}
Our simulation parameters are provided in Table~\ref{tab:algcompareparameters}, which  are the typical values used in the literature \cite{nazemi2016qoi, heinzelman2000energy, ye2002energy}.

\begin{table}[ht]
	\centering
	\caption{Parameters used in simulations}
	\vspace{-0.1in}
	\begin{tabular}{|c|c||c|c|}
		\hline
		\textbf{Parameter} & \textbf{Value} &\textbf{Parameter} & \textbf{Value (Joules)} \\ \hline
		$y_k$	& 1000 &  $\varepsilon_{vR}$ &  50  $\times$ 10$^{-9}$             \\ \hline
		$R_k$	& 100    & $\varepsilon_{vT}$&   200  $\times$ 10$^{-9}$              \\ \hline
		$w_{ca}$	& 1.88 $\times$ 10$^{-6}$&  $\varepsilon_{cR}$ &  80  $\times$ 10$^{-9}$               \\ \hline
		$T$	& 10s   &$\gamma$ & [$1,\sum_{k \in \mathcal{K}}y_k$]               \\ \hline
	\end{tabular}
	\label{tab:algcompareparameters}
\end{table}

\begin{figure}
	\centering
	\vspace{-0.1in}
	\includegraphics[width=0.8\textwidth]{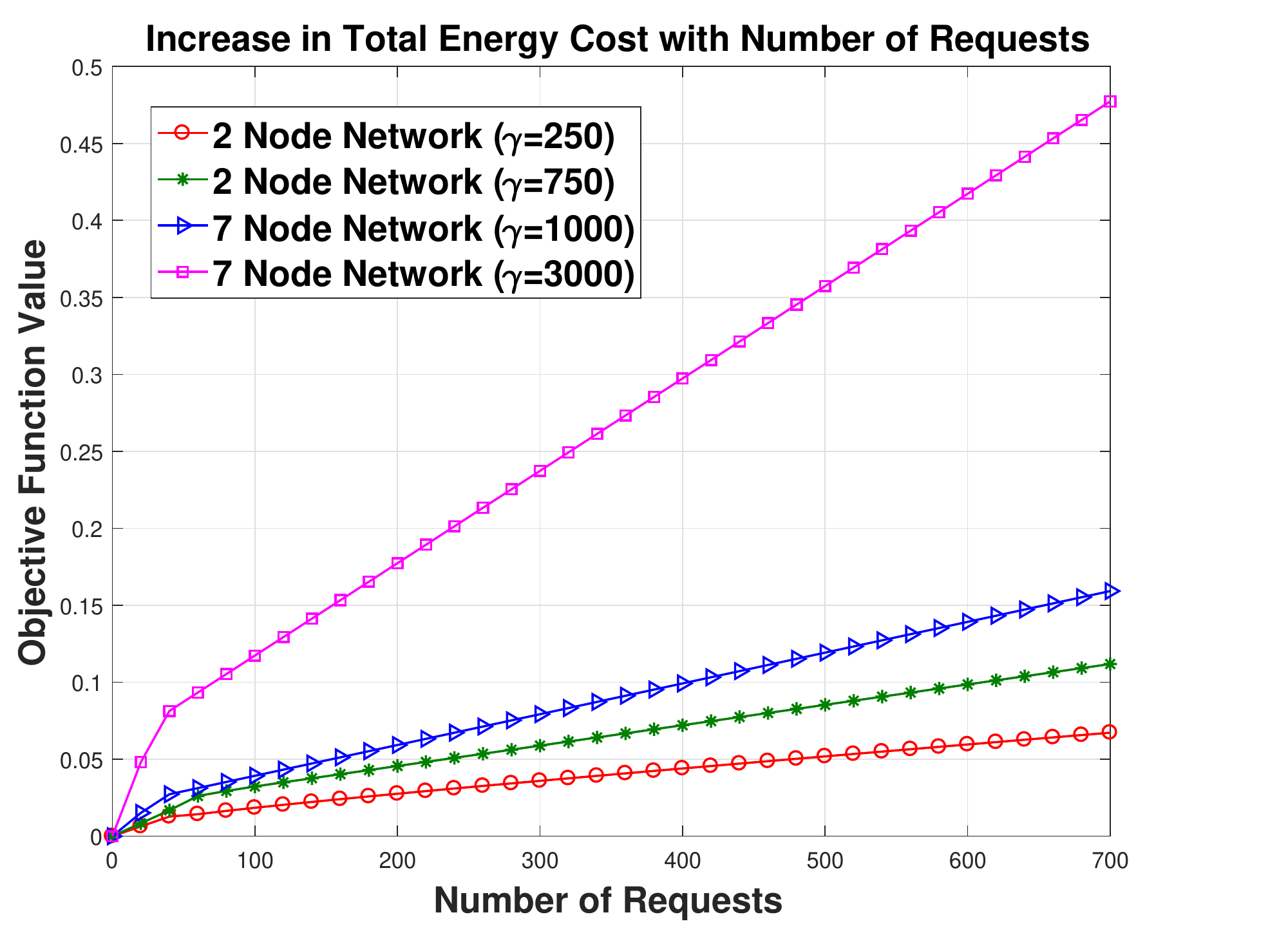}
	\caption{Total Energy Costs vs. Number of Requests.}
	\label{fig:mono2node}
\end{figure} 

\begin{table*}[]
	\centering
	\caption{The Best Solution to the Objective Function (Obj.) and Convergence time for two nodes network}
	\vspace{-0.1in}
	\begin{tabular}{|l|l|l|l|l|l|l|l|l|l|l|}
		\hline
		\multirow{2}{*}{\textbf{Solver}} & \multicolumn{2}{c|}{{$\gamma=1$}}               & \multicolumn{2}{c|}{{$\gamma= 250$}}      & \multicolumn{2}{c|}{{$\gamma=500$}}     & \multicolumn{2}{c|}{\textbf{$\gamma=750$}}     & \multicolumn{2}{c|}{\textbf{$\gamma=1000$}}    \\ \cline{2-11} 
		& \textbf{Obj.} & \textbf{Time} (s) & \textbf{Obj.} & \textbf{Time} (s) & \textbf{Obj.} & \textbf{Time} (s)& \textbf{Obj.} & \textbf{Time} (s) & \textbf{Obj.} & \textbf{Time} (s) \\ \hline
		\textbf{Bonmin}                  &  0.010&	0.076& 
		0.018&	0.07&	0.026&	0.071&	0.032&	0.077&	0.039&	0.102 \\ \hline
		\textbf{NOMAD}                   & 0.012&	1.036& 
		0.038&	0.739&	0.033&	0.640&	0.038&	0.203&	0.039&	0.263
		\\ \hline
		\textbf{GA}                      & 0.010&	0.286
		&	0.018&	2.817&	0.026&	7.670&	0.042	&11.020	&0.064	&3.330
		\\ \hline
		\textbf{V-SBB}                     & 0.010& 	18.231&
		0.018& 	17.389& 	0.026& 	12.278& 	0.032& 	7.327& 	0.039& 	19.437
		\\ \hline
		
			\textbf{{SCIP}}                     &{Inf} & {0.07}	&{0.0012}
		&{0.07} & { 0.005}	&{0.05} 	& {0.011}	&{0.087} & {0.039}	& 	{0.05}		\\ \hline
		
			\textbf{{Baron}}                   & {0.01 }& {0.91}	&{0.018}
		&{0.79} & {0.026}	& {0.77}	& {0.032}	&{0.87} & {0.039}	&{ 0.49}			\\ \hline
		
		\textbf{{Antigone}}                   & {0.01 }& {0.195}	&{0.018}
	&{0.18} & {0.026}	& {0.175}	& {0.032}	&{0.19} & {0.039}	&{0.2}			\\ \hline
		
	\end{tabular}
	\label{tab:results2node}
\end{table*}

\begin{table*}[]
	\centering
	\caption{The Best Solution to the Objective Function (Obj.) and Convergence time for seven nodes network}
	\vspace{-0.1in}
	\begin{tabular}{|l|l|l|l|l|l|l|l|l|l|l|}
		\hline
		\multirow{2}{*}{\textbf{Solver}} & \multicolumn{2}{c|}{{$\gamma=1$}}            & \multicolumn{2}{c|}{\textbf{$\gamma=1000$}}      & \multicolumn{2}{c|}{\textbf{$\gamma=2000$}}     & \multicolumn{2}{c|}{\textbf{$\gamma=3000$}}     & \multicolumn{2}{c|}{\textbf{$\gamma=4000$}}    \\ \cline{2-11} 
		& \textbf{Obj.} & \textbf{Time} (s) &  \textbf{Obj.} &\textbf {Time} (s) & \textbf{Obj.} & \textbf{Time} (s) & \textbf{Obj.} & \textbf{Time} (s) & \textbf{Obj.} & \textbf{Time} (s) \\ \hline
		\textbf{Bonmin}                  & 0.0002&	0.214
		&	0.039&	0.164&	0.078&	0.593&	0.117&	0.167&	0.156	&0.212
		
		\\ \hline
		\textbf{NOMAD}                   & 0.004&	433.988& 
		0.121&	381.293&	0.108&	203.696&	0.158&	61.093&	0.181&	26.031
		\\ \hline
		\textbf{GA}                      & 0.043&	44.538	
		&0.096	&30.605	&0.164	&44.970	&0.226&	17.307	&0.303	&28.820
		\\ \hline
		\textbf{V-SBB}                     & 0.0001& 	1871.403&
		0.039& 	25.101& 	0.078& 	30.425& 	0.117& 	23.706& 	0.156	& 19.125
		\\ \hline
	
\textbf{{SCIP}}                     &{NC} & {5901.7}	&
{NC}		& {7200}& {NC}	& {4829.4}	&{NC} 	&{7200} & {0.156}	& {1.37}			\\ \hline
		
			\textbf{{Baron}}                     &{0.0002} & {00.74}	&{0.039}
		&{1002.14} & {0.078}	&{7200} 	& {0.117}	& {3.41}&{0.156} 	& 	{0.15}		\\ \hline
	\textbf{{Antigone}}                   & {0.0002}& {3.57}	&{0.039}
	&{0.38} & {0.081}	& {0.34}	& {0.117}	&{0.32} & {0.156}	&{0.13}			\\ \hline	
	\end{tabular}
	\label{tab:results7node}
\end{table*}

\vspace{-0.15in}
\subsection{{Efficacy of the C$3$ Framework}}

{Figure \ref{fig:mono2node} shows the increase in energy consumption with increase in the number of requests in different network and compression settings. We observe that as the number of requests increases, the total energy cost increases, as reflected in Remark~\ref{thm:increasingR}. An important observation is that the initial increase in the energy cost is large. However, when the data are cached (number of requests $\approx 40$), the slope decreases. This is because the transmission cost is usually much larger than the caching cost (using the energy proportional model for caching \cite{choi2012network}) and once the data are cached, the cached copy is used to satisfy other requests.} \par {For the energy efficiency, }
we compare the total energy costs under joint C$3$ optimization with those under C$2$ optimization. We consider two cases for the C$2$ optimization: (i) C$2$o (Communication and Computation), where we set $S_v=0$ for each node to avoid any data caching; (ii) C$2$a (Communication and Caching), where we set $\gamma=\sum_{k \in \mathcal{K}}y_k,$ which is equivalent to $\delta_v=1,$ $\forall v\in V$, i.e., no computation. Comparison between C$3$, C$2$o and C$2$a is shown in Figure~\ref{fig:c3vsc2}.  
{For the parameters that we used in simulation},  the energy cost for the C$3$ joint optimization is lower than that for  C$2$o optimization for the same parameter setting. {This highlights the improvement that can be achieved using C$3$ framework. }
 In other words, although C$3$ incurs caching costs, it may significantly reduce the communication and computation, which in turn brings down total energy cost.   Similarly, C$3$ optimization outperforms C$2$a.  Using Equation~(\ref{eq:efficiency}), energy efficiency improves by as much as $88\%$ for the C$3$ framework when compared with the C$2$ formulation.  These trends are observed  in other candidate network topologies. 
Figure \ref{fig:c3vsc2-2node} shows the improvement  that C$3$ brings in comparison with C$2$ for a two nodes network. Using Equation~(\ref{eq:efficiency}), energy efficiency improves by as much as $70\%$ for the C$3$ framework when compared with the C$2$ formulation. The results for three nodes and four nodes networks are presented in Tables~\ref{tab:results3node} and~\ref{tab:results4node}. 

\vspace{-0.1in}
\begin{rem}
	Note that the above results are based on parameter values typically used in the literature, as shown in Table~\ref{tab:algcompareparameters}. From our analysis, it is clear that the larger the ratio between $\varepsilon_{vT}$ and $\varepsilon_{vR}$, $\varepsilon_{vC}$, the larger will be the improvement provided by our C$3$ formulation. 
\end{rem}
\vspace{-0.2in}

\subsection{{Comparison of Solvers}}
{We compare the performance of our proposed V-SBB with other MINLP solvers in terms of:}
\subsubsection{The Best Solution to the Objective Function}\label{subsec:value}
 We compare the performance of V-SBB with three other candidate solvers for  the networks in Figure~\ref{fig:networks}.   The results for two nodes and seven nodes networks are presented in Tables~\ref{tab:results2node} and~\ref{tab:results7node}.  We observe that {V-SBB, Bonmin, SCIP, Antigone, and Baron achieve comparable {objective function} value for larger values of $\gamma$, while V-SBB outperforms other algorithms for lower values of $\gamma$ (discussed in detail later). Furthermore, Bonmin and SCIP cannot generate a feasible solution even if it exists for some cases. Particularly, for Bonmin, }
 {there are a number of probable reasons for such a problem: a) For MINLP problems with non-convex functions, Bonmin relies on heuristic options and does not guarantee $\epsilon$-global optimality \cite{fiat2009algorithms}. The heuristics can cause such problems; b) }  { The Branch-and-Cut method, used by Bonmin,  is based on outer-approximation (OA) algorithm \cite{bonminManual}. For the MINLP with non-convex functions, OA  constraints do not necessarily result in valid inequalities for the problem. Hence Bonmin's Branch-and-Cut method  sometimes cuts regions where a lower value exists.  }  NOMAD and GA in general yield a higher objective-function value than V-SBB does. This is because both NOMAD and GA are based on a stochastic approach which cannot guarantee convergence to the $\epsilon$-global optimum.  Similar trends are observed for three and four node networks. 


\subsubsection{Convergence Time}\label{subsec:speed}
The time taken to obtain the best solution is important in practice.  The amount of time that an algorithm requires to obtain its best solution as discussed in Section~\ref{subsec:value}  are shown in Tables~\ref{tab:results2node} and~\ref{tab:results7node} for the two nodes and seven nodes networks, respectively.  It  can be seen that Bonmin{, Antigone, Baron and SCIP (when it is able to provide a solution) are the fastest methods}.  {However,  Bonmin, SCIP and Baron sometimes  cannot find a solution although it exists.}
V-SBB takes longer to obtain a better solution, because our reformulation introduces auxiliary variables and additional linear constraints.   Different applications can tolerate various degrees of algorithm speed. For the sample networks and applications under consideration, the speed of V-SBB is considered to be acceptable \cite{floudas2013deterministic}. 

\subsubsection{Stability}\label{subsec:stability}
From the analysis in Sections~\ref{subsec:value} and~\ref{subsec:speed}, we know that Bonmin is faster but unstable in some situations.  We further characterize the stability of Bonmin with respect to the threshold value of QoI $\gamma$ as follows. Specifically, we fix all other parameters  in Table~\ref{tab:algcompareparameters}, and vary only the maximal possible value of $\gamma$ in different networks. The results are shown in Table~\ref{tab:bonmininfeas}. For each maximal value, we test all the possible integer values of $\gamma$ between $1$ and itself. Hence, the number of tests equals the maximal value.  We see that the number of instances where the Bonmin method fails to produce a feasible solution increases as the network size increases. 

{Furthermore,  Bonmin, Baron and Antigone} can provide a feasible solution for smaller values of $\gamma$ at a faster time, we observe that the value of the solution is larger than that of V-SBB.  We compare the performance of V-SBB {with these algorithms} for smaller values of $\gamma$ in Table~\ref{tab:improvement}.  We see that V-SBB outperforms Bonmin{, Antigone and Baron} by as much as $52.45\%${, $50\%$, and $50\%$, respectively} when searching for an $\epsilon$-global optimum, though it requires more time. The timer is set to $7200$s for results shown in Table~\ref{tab:improvement}. {Results for three node and four node networks are given in Tables \ref{tab:results3node} and \ref{tab:results4node} respectively.  SCIP, for certain instances of the three node network, provides the lowest objective function value. However, for majority of the cases, we observe similar trends like Tables \ref{tab:results2node} and \ref{tab:results7node}. }

\begin{table}
	\centering
	\caption{Infeasibility of Bonmin for networks in Figure~\ref{fig:networks}}
	\begin{tabular}{|c|c|c|c|c|}
		\hline
		\textbf{Networks}  & (a) &  (b) &  (c) &  (d) \\ \hline
		\textbf{$\#$ of test values}      & {1000} & {2000} & {2000} & 4000 \\ \hline
		\textbf{$\#$ of infeasible solutions} & 0              & 0            & 1            & 216          \\ \hline
		\textbf{Infeasibility (\%)} & 0              & 0           & 0.05           & 5.4          \\ \hline
	\end{tabular}
	\label{tab:bonmininfeas}
\end{table}

\begin{table*}[ht]
	\footnotesize
	\caption{Comparison between V-SBB and Bonmin for smaller values of $\gamma$ in seven node network}
	\vspace{-0.1in}
	\label{tab:improvement}
	\begin{tabular}{|p{4cm}|l|l|l|l|l|l|l|l|l|l|}
		\hline
		\multirow{2}{*}{\textbf{Solver}} & \multicolumn{2}{c|}{$\gamma$ =1} & \multicolumn{2}{c|}{$\gamma$=3} & \multicolumn{2}{c|}{$\gamma$ =5} & \multicolumn{2}{c|}{$\gamma$ =8} & \multicolumn{2}{c|}{$\gamma$ =50} \\ \cline{2-11} 
		& \textbf{Obj.}   & \textbf{Time (s)}  & \textbf{Obj.}    & \textbf{Time}    & \textbf{Obj.}     & \textbf{Time}    & \textbf{Obj.}     & \textbf{Time}    & \textbf{Obj.}     & \textbf{Time}     \\ \hline
		\textbf{Bonmin}                  & 0.0002 &	0.214 &	0.0003 &	0.211 &	0.0003 &	0.224 &	 0.0005&	0.23&	0.0021&	0.364
		\\ \hline
			\textbf{{Antigone}}                  & {0.0002} &	{3.57} &	{0.000317} &	{2.47} &	{0.000395} &	{6.53} &	{0.000512} & {15.61} & {0.002153} & {2.71}
		\\ \hline
			\textbf{{Baron}}                  & {0.0002} &	{0.74} &	{0.00031} &	{4846} &	{0.00039} &	{7200} &	{0.005} & {7200} & {0.0021} & {7200}
		\\ \hline
		
		\textbf{V-SBB}                   & 0.00011&	1871&	0.00015&	2330&	0.00019&	1243	&0.00047	&1350&	0.0020	&3325
		\\ \hline
		\textbf{Improvement over Bonmin (\%)}             & \multicolumn{2}{c|}{52.45}           & \multicolumn{2}{c|}{49.43}          & \multicolumn{2}{c|}{50.30}          & \multicolumn{2}{c|}{7.59}         & \multicolumn{2}{c|}{4.62}   \\ \hline
	
		\textbf{{Improvement over Antigone (\%)}}             & \multicolumn{2}{c|}{{50
		}}           & \multicolumn{2}{c|}{{52.72}}          & \multicolumn{2}{c|}{{51.92}}          & \multicolumn{2}{c|}{{8.27}}         & \multicolumn{2}{c|}{{7.08}}   \\ \hline
		
		\textbf{{Improvement over Baron (\%)}}             & \multicolumn{2}{c|}{{50}}           & \multicolumn{2}{c|}{{51.61}}          & \multicolumn{2}{c|}{{51.28}}          & \multicolumn{2}{c|}{{6}}         & \multicolumn{2}{c|}{{4.79}}   \\ \hline

	\end{tabular}
\end{table*}

\begin{figure}
	\centering
	\includegraphics[width=0.8\textwidth]{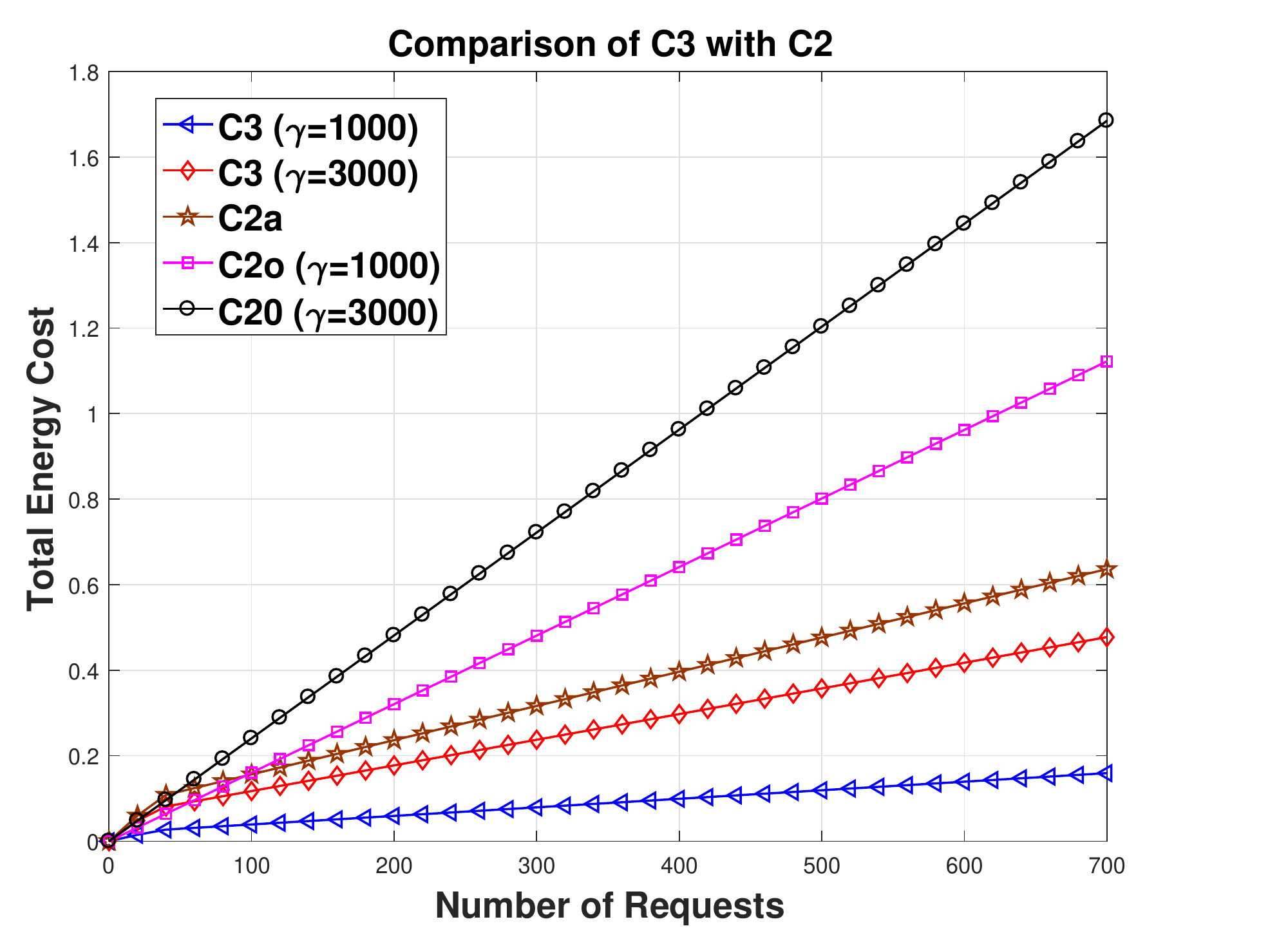}
	\caption{Comparison of C$3$ and C$2$ optimization for the seven node network in Figure~\ref{fig:networks}.}
	\label{fig:c3vsc2}
	\vspace{-0.2in}
\end{figure} 

\vspace{-0.1in}
\begin{figure}
	\centering
	\includegraphics[width=0.8\textwidth]{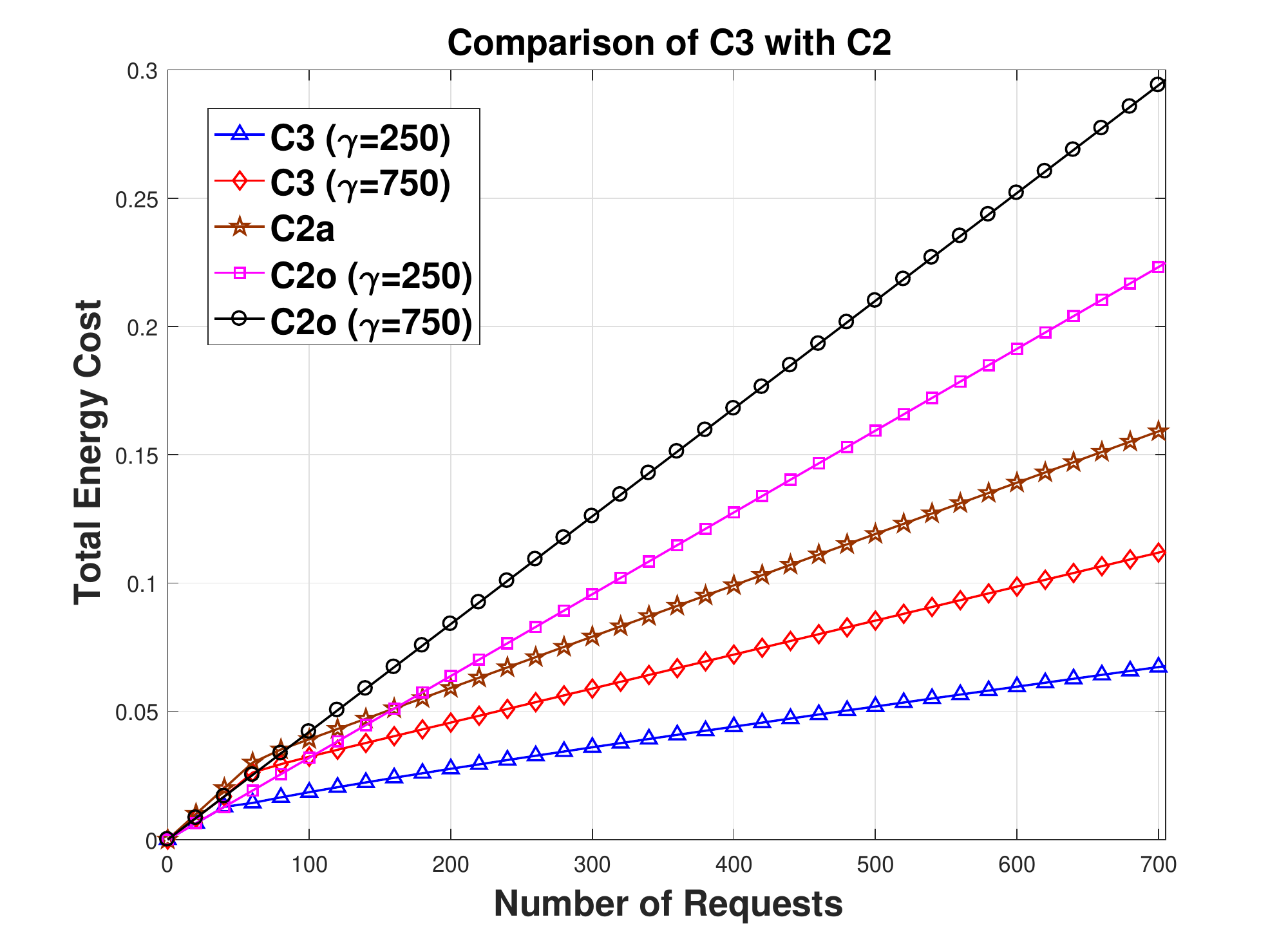}
	\caption{Comparison of C$3$ and C$2$ optimization for the two nodes network in Figure~\ref{fig:networks}.}
	\label{fig:c3vsc2-2node}
	\vspace{-0.3in}
\end{figure} 



\begin{table*}[]
	\centering
	\caption{The Value of Objective Function (Obj.) and Convergence Speed for three node network}
	\vspace{-0.1in}
	\begin{tabular}{|l|l|l|l|l|l|l|l|l|l|l|}
		\hline
		\multirow{2}{*}{\textbf{Solver}} & \multicolumn{2}{c|}{{$\gamma=1$}}               & \multicolumn{2}{c|}{{{$\gamma= 500$}}}      & \multicolumn{2}{c|}{{{$\gamma=1000$}}}     & \multicolumn{2}{c|}{\textbf{{$\gamma=1500$}}}     & \multicolumn{2}{c|}{\textbf{{$\gamma=2000$}}}    \\ \cline{2-11} 
		& \textbf{Obj.} & \textbf{Time} (s) 	& \textbf{Obj.} & \textbf{Time} (s) & \textbf{Obj.} & \textbf{Time} (s)& \textbf{Obj.} & \textbf{Time} (s) & \textbf{Obj.} & \textbf{Time} (s) \\ \hline
		\textbf{Bonmin}                 & {0.005}&{0.26}	&{0.01}
		&{0.14}&{0.019}	&{0.10}	&{0.028}	&{0.10}	&{0.0383}	&{5.56}
		
		\\ \hline
		\textbf{NOMAD}                  & {0.045}&{12.42}	&{0.025}
		&{11.14}&{0.033}	&{9.30}	&{0.029}	&{46.41}	&{0.038}	&{5.45}
		
		\\ \hline
		\textbf{GA}                     & {0.005}&{0.69}	&{0.025}
		&{26.11}&{0.019}	&{16.85}	&{0.034}	&{40.56}	&{0.044}	&{10.34}
		
		\\ \hline
		\textbf{V-SBB}           & {0.005}&{46.1}	&{0.019}	&{45.34}&{0.019}	&{8.1}	&{0.028}	&{56.3}	&{0.0383}	&{12.2}
		
		\\ \hline
		
		\textbf{{SCIP}}                   & {0.00005}&{4.96}	&{0.000056}
		&{0.16}&{0.000054}	&{0.18}	&{0.028}	&{0.07}	&{0.038}	&{0.05}	\\ \hline
		
			\textbf{{Baron}}                     &{0.005} & {0.1}	&{0.01}
		&{0.09} & {0.019}	&{0.09} 	& {0.028}	& {0.1}&{0.0383} 	& 	{0.1}		\\ \hline
		\textbf{{Antigone}}                   & {0.005 }& {0.11}	&{0.01}
		&{0.09} & {0.019}	& {0.08}	& {0.028}	&{0.21} & {0.038}	&{1.51}			\\ \hline
		
	\end{tabular}
	\label{tab:results3node}
\end{table*}

\begin{table*}[]
	\centering
	\caption{The Value of Objective Function (Obj.) and Convergence Speed for four node network}
	\vspace{-0.1in}
	\begin{tabular}{|l|l|l|l|l|l|l|l|l|l|l|}
		\hline
		\multirow{2}{*}{\textbf{Solver}} & \multicolumn{2}{c|}{{$\gamma=1$}}               & \multicolumn{2}{c|}{{{$\gamma= 500$}}}      & \multicolumn{2}{c|}{{{$\gamma=1000$}}}     & \multicolumn{2}{c|}{\textbf{{$\gamma=1500$}}}     & \multicolumn{2}{c|}{\textbf{{$\gamma=2000$}}}    \\ \cline{2-11} 
		& \textbf{Obj.} & \textbf{Time} (s) & \textbf{Obj.} &  \textbf{Time} (s) & \textbf{Obj.} & \textbf{Time} (s)& \textbf{Obj.} & \textbf{Time} (s) & \textbf{Obj.} & \textbf{Time} (s) \\ \hline
		\textbf{Bonmin}                  & {0.002}&{0.36}	&{0.02}
		&{0.11}&{0.039}	&{ 0.11}	&{0.06}	&{ 0.10}	&{0.08}	&{0.16}
		\\ \hline
		\textbf{NOMAD}                      & {0.003}&{112.5}	&{0.023}
		&{97.68}&{0.04}	&{59.86}	&{0.06}	&{52.8}	&{0.10}	&{ 2.28}
		
		\\ \hline
		\textbf{GA}                       & {0.004}&{1.01}	&{0.02}
		&{24.94}&{0.04}	&{13.02}	&{ 0.12}	&{27.7}	&{0.14}	&{ 35.33}
		
		\\ \hline
		\textbf{V-SBB}                     &{0.02} &{400}	&
{0.02}	&{400}		& {0.039}&{400}	&{0.071}	& {400}		& {0.078}&	{400}
		
		\\ \hline
		
		\textbf{{SCIP}}                & {0.002}&{7.05}	&{0.02}
		&{1999.4}&{0.0004}	&{2.00}	&{0.009}	&{0.43}	&{0.04}	&{0.16}		\\ \hline
		
			\textbf{{Baron}}                     &{0.002} & {0.52}	&{0.02}
		&{2.69} & {0.039}	&{0.89} 	& {0.06}	& {0.16}&{0.078} 	& 	{0.1}		\\ \hline
	\textbf{{Antigone}}                   & {0.002 }& {21.2}	&{0.02}
	&{0.26} & {0.042}	& {0.18}	& {0.06}	&{0.1} & {0.078}	&{0.08}			\\ \hline	
	\end{tabular}
	\label{tab:results4node}
		\vspace{-0.25in}
\end{table*}

\section{Conclusion}\label{sec:con}

We have investigated energy efficiency tradeoffs among communication, computation and caching with QoI guarantee in {distributed} networks. We first formulated an optimization problem that characterizes these energy costs.  This optimization problem belongs to the non-convex class of MINLP, which is hard to solve in general.  We then proposed a variant of the  spatial branch-and-bound (V-SBB) algorithm, which can solve the MINLP with $\epsilon$-optimality guarantee.  Finally, we showed numerically that the newly proposed V-SBB algorithm outperforms the existing MINLP solvers, Bonmin, NOMAD and GA.  We also observed that C3 optimization framework, which to the best of our knowledge has not been investigated in the literature,  leads to an energy saving of as much as $88\%$ {when} compared with either of the C2 optimizations which have been widely studied. 

Going further, we aim to extend our results in two ways.  The first is to refine and improve the symbolic reformulation to reduce the number of needed auxiliary variables in order to shorten the algorithm execution time. Second, since many networking problems involve the optimization of both continuous and discrete variables as in this work, we plan to apply and extend the newly proposed V-SBB to solve those problems.

\section*{Acknowledgments}\label{sec:ack}
The material in this paper has been accepted for publication in part at IEEE Globecom, Abu Dhabi, United Arab Emirates, December 2018.
This work was supported by the U.S. Army Research Laboratory and the U.K. Ministry of Defence under Agreement Number W911NF-16-3-0001.  The views and conclusions contained in this document are those of the authors and should not be interpreted as representing the official policies, either expressed or implied, of the U.S. Army Research Laboratory, the U.S. Government, the U.K. Ministry of Defence or the U.K. Government. The U.S. and U.K. Governments are authorized to reproduce and distribute reprints for Government purposes notwithstanding any copy-right notation hereon. Faheem Zafari also acknowledges the financial support by EPSRC Centre for Doctoral Training in High Performance Embedded and Distributed Systems  (HiPEDS, Grant Reference EP/L016796/1), and Department of Electrical and Electronics Engineering, Imperial College London. {The authors will also like to thank Dr. Ruth Misener and the Chemical Engineering Department at Imperial College London for providing us with the access to Baron and Antigone Solvers.

\bibliographystyle{IEEETran}
\bibliography{refs2}  

\appendices
\section{}\label{sec:appendixd}

\begin{small}
	\begin{align}
	\min_{w}\quad &w_{f}\nonumber \\
	\text{s.t.} \quad&\sum_{k\in \mathcal{K}}y_{k} {\overline{w}_{k,j}^{C_1}}\geq\gamma, \nonumber\\
	&  \sum_{k \in C_v}y_k {\overline{w}_{k,i}^{C_2}}\leq S_v, \forall \; v \in V, \nonumber \\
	& \sum_{i=0}^{h(k)}b_{k,i} \leq 1,\forall k\in\mathcal{K},  \nonumber \\
	&  b_{k,i}\in\{0,1\}, \forall  k\in\mathcal{K}, i=0, \cdots, h(k),  \nonumber\\
	&	\mathbf{w}^l\leq \mathbf{w}\leq \mathbf{w}^U,\; \forall  k\in\mathcal{K}, i=0, \cdots, h(k),  \nonumber \\
	& w_{k,i}^b=\delta_{k, i}\times \overline{w}_{k,a}, \;  \forall  k\in\mathcal{K}, i=0, \cdots, h(k),  \nonumber \\
	& w_{k,i}^f=\frac{\overline{w}_{k,a}}{\delta_{k, i}} , \;  \forall  k\in\mathcal{K}, i=0, \cdots, h(k), \nonumber \\
	& y_k b_{k,i}-\overline{w}_{k,i}^{C''_2}=0 , \;  \forall  k\in\mathcal{K}, i=0, \cdots, h(k), \nonumber\\
	& \overline{w}_{k,i}^{C_2}= \overline{w}_{k,i}^{C''_2} \times \overline{w}_{k,\beta}^{C'_2} , \;  \forall  k\in\mathcal{K}, i=0, \cdots, h(k), \nonumber    \\
	& \sum_{j=0}^{i-1}b_{k,j} - \widetilde{w}_{k,i}=0 , \;  \forall  k\in\mathcal{K}, i=0, \cdots, h(k), \nonumber \\
	& \overline{\overline{w}}_{k,i}=\overline{w}_{k,a} \times \widetilde{w}_{k,i} , \;  \forall  k\in\mathcal{K}, i=0, \cdots, h(k),  \nonumber   \\
	& \overline{w}_{k,i}^b=w_{k,i}^b \times \widetilde{w}_{k,i} , \;  \forall  k\in\mathcal{K}, i=0, \cdots, h(k),  \nonumber \\
	& \overline{w}_{k,i}^f=w_{k,i}^f \times \widetilde{w}_{k,i}, \;  \forall  k\in\mathcal{K}, i=0, \cdots, h(k),  \nonumber \\
	& \prod_{m=i+1}^{h(k)}\delta_{k, m}=\overline{w}_{k,\underbrace{h(k)-2-i}_a}=\begin{cases}
	\delta_{k,h(k)} \times \delta_{k,h(k)-1},& \forall a=0 \\ 
	\overline{w}_{k,a-1}\times \delta_{k,m}, &  m+a=h(k)-1 \\ 
	\delta_{k,h(k)}, & \forall i=h(k)-1, 
	\end{cases} \nonumber \\
	& \prod_{i=0}^{h(k)}\delta_{k, i}=\overline{w}_{k,\underbrace{h(k)-1}_j}^{C_1}= 
	\begin{cases}
	\delta_{k,h(k)} \times \delta_{k,h(k)-1},& \forall j=0 \\ 
	\overline{w}_{k,j-1}^{C_1}\times \delta_{k,j+1}, &  \forall j=1 \cdots h(k)-1 \\ 
	\delta_{k,h(k)}, & \forall i=h(k),
	\end{cases} \nonumber\\
	& \prod_{j=h(k)}^{h(v)}\delta_{k,j}={\overline{w}_{k,\underbrace{\tau-2}_\beta}^{C'_2}=}
	\begin{cases}
	\delta_{k,h(k)} \times \delta_{k,h(k)-1},& \forall \beta=0 \\ 
	\overline{w}_{k,j-1}^{C_1}\times \delta_{k,j+1}, &  \forall \beta > 0 \\ 
	\delta_{k,h(k)}, & \forall \beta <0,
	\end{cases} \nonumber \\
	&w_f=\sum_{k \in \mathcal{K}}\sum_{i=0}^{h(k)}y_k \Bigg(\bigg( \varepsilon_{kR}\overline{w}_{k,a}+\varepsilon_{kT}w_{k,i}^b+\varepsilon_{kC} w_{k,i}^f - \varepsilon_{kc}\overline{w}_{ka}\bigg)+A+B\Bigg),\nonumber \\
	& A=\varepsilon_{kR}R_k\overline{w}_{k,a}+\varepsilon_{kT}w_{k,i}^b+\varepsilon_{kc}R_kw_{k,i}^f-\varepsilon_{kC}R_k\overline{w}_{k,a}- \varepsilon_{kR}\overline{w}_{k,a}-\varepsilon_{kT}w_{k,i}^b-\varepsilon_{kC}w_{k,i}^f+\varepsilon_{kC}\overline{w}_{k,a}, \nonumber \\
	& B=-\varepsilon_{kR}R_k \overline{\overline{w}}_{k,i}-\varepsilon_{kT}R_k \overline{w}_{k,i}^b-\varepsilon_{kc}R_k\overline{w}_{k,i}^f+  \varepsilon_{kC}R_k\overline{\overline{w}}_{k,i}+\varepsilon_{kR}\overline{\overline{w}}_{k,i}+\varepsilon_{kT}\overline{w}_{k,i}^b+\varepsilon_{kC}\overline{w}_{k,i}^f- \varepsilon_{kC}\overline{\overline{w}}_{k,i}.
	\label{eq:exactproblem}
	\end{align}
\end{small}


\section{}\label{sec:appendixc}

\begin{table*}
	\caption{Symbolic Reformulation Rules defined in \cite{smith1996global} where X stands for expression, C stands for Constant and V stands for variable}
	\begin{tabular}{|c|c|p{3cm}|l|p{2cm}|p{2cm}|}
		\hline
		\multicolumn{1}{|l|}{\textbf{Left Subtree Class}} & \multicolumn{1}{l|}{\textbf{Right Subtree Class}} & \textbf{Binary Operator} & \textbf{New Variable Definition} & \textbf{New Linear Constraint } & \textbf{Binary Tree Class} \\ \hline
		\multirow{3}{*}{C}         & \multirow{3}{*}{C}       & $\pm$      &                  &     & C    \\ \cline{3-6} 
		&                                                   & $\times$                       &                                  &                                         & C                          \\ \cline{3-6} 
		&                                                   & $\div$                        &                                  &                                         & C                          \\ \hline
		\multirow{3}{*}{V}                                & \multirow{3}{*}{C}                                & $\pm$                      &                                  &                                         & X                          \\ \cline{3-6} 
		&                                                   & $\times$                        &                                  &                                         & X                          \\ \cline{3-6} 
		&                                                   & $\div$                        &                                  &                                         & X                          \\ \hline
		\multirow{3}{*}{X}                                & \multirow{3}{*}{C}                                & $\pm$                      &                                  &                                         & X                          \\ \cline{3-6} 
		&                                                   & $\times$                        &                                  &                                         & X                          \\ \cline{3-6} 
		&                                                   & $\div$                        &                                  &                                         & X                          \\ \hline
		\multirow{3}{*}{C}                                & \multirow{3}{*}{V}                                & $\pm$                      &                                  &                                         & X                          \\ \cline{3-6} 
		&                                                   & $\times$                        &                                  &                                         & X                          \\ \cline{3-6} 
		&                                                   & $\div$                        & Linear Fractional                &                                         & V                          \\ \hline
		\multirow{3}{*}{V}                                & \multirow{3}{*}{V}                                & $\pm$                      &                                  &                                         & X                          \\ \cline{3-6} 
		&                                                   & $\times$                        & Bilinear                         &                                         & V                          \\ \cline{3-6} 
		&                                                   & $\div$                        & Linear Fractional                &                                         & V                          \\ \hline
		\multirow{3}{*}{X}                                & \multirow{3}{*}{V}                                & $\pm$                      &                                  &                                         & X                          \\ \cline{3-6} 
		&                                                   & $\times$                        & Bilinear                         & Left                                    & X                          \\ \cline{3-6} 
		&                                                   & $\div$                        & Linear Fractional                & Left                                    & X                          \\ \hline
		\multirow{3}{*}{C}                                & \multirow{3}{*}{X}                                & $\pm$                      &                                  &                                         & X                          \\ \cline{3-6} 
		&                                                   & $\times$                         &                                  &                                         & X                          \\ \cline{3-6} 
		&                                                   & $\div$                        & Linear Fractional                & \multicolumn{1}{r|}{Right}              & V                          \\ \hline
		\multirow{3}{*}{V}                                & \multirow{3}{*}{X}                                & $\pm$                      &                                  &                                         & X                          \\ \cline{3-6} 
		&                                                   & $\times$                        & Bilinear                         & \multicolumn{1}{r|}{Right}              & V                          \\ \cline{3-6} 
		&                                                   & $\div$                        & Linear Fractional                & \multicolumn{1}{r|}{Right}              & V                          \\ \hline
		\multirow{3}{*}{X}                                & \multirow{3}{*}{X}                                & $\pm$                      &                                  &                                         & X                          \\ \cline{3-6} 
		&                                                   & $\times$                       & Bilinear                         & Left,  Right                            & V                          \\ \cline{3-6} 
		&                                                   & $\div$                        & Linear Fractional                & Left, Right                             & V                          \\ \hline
	\end{tabular}
	\label{tab:binaryrules}
\end{table*}

\subsection{Symbolic Reformulation}
  The first step of the symbolic reformulation is to represent the algebraic expression (objective function and constraints) using a binary tree as shown in Figure \ref{fig:symbinary}. Symbolic reformulation transforms the algebraic expression represented as binary tree into a set of linear constraints that might involve some newly introduced variables. As our optimization problem \eqref{eq:optimization}  contains bilinear and linear fractional terms, the newly introduced auxiliary variables are therefore either products or ratios of other variables i.e. $w_i\equiv w_jw_k$ and $w_i\equiv \frac{w_j}{w_k}$. The rules for efficiently\footnote{Keeping the number of newly introduced variables to minimum} achieving such transformation are presented in \cite{smith1996global} part of which we restate in the Table \ref{tab:binaryrules}. 
   We create binary tree for representing the algebraic expressions and assign the leaf nodes a class that can be either a constant (C), an expression (X), or a variable (V).  If we are at some intermediate node that represents a multiplication operation, and both its right and left child nodes are of class expression (X), then the reformulation would require us to introduce two linear constraints (for both right and left node explained) as well as introduce new bilinear auxiliary variable. 
\begin{align}
\min_{\boldsymbol w}\quad &\boldsymbol w_{f}\nonumber \\
\text{s.t.} \quad&\mathbf{Aw=b}, \nonumber\\
&\mathbf{w}^l\leq \mathbf{w}\leq \mathbf{w}^U,\nonumber\\
& \mathbf{w}_k \equiv \mathbf{w}_i \mathbf{w}_j, \; \forall(i,j,k) \;\in \; \mathcal{T}_{\text{bt}},\nonumber\\
& \mathbf{w}_k \equiv \mathbf{w}_i/\mathbf{w}_j, \; \forall(i,j,k) \;\in \; \mathcal{T}_{\text{lft}}.
\label{eq:exactproblem2}
\end{align}

\begin{figure}
	\centering
	\includegraphics[width=0.5\textwidth]{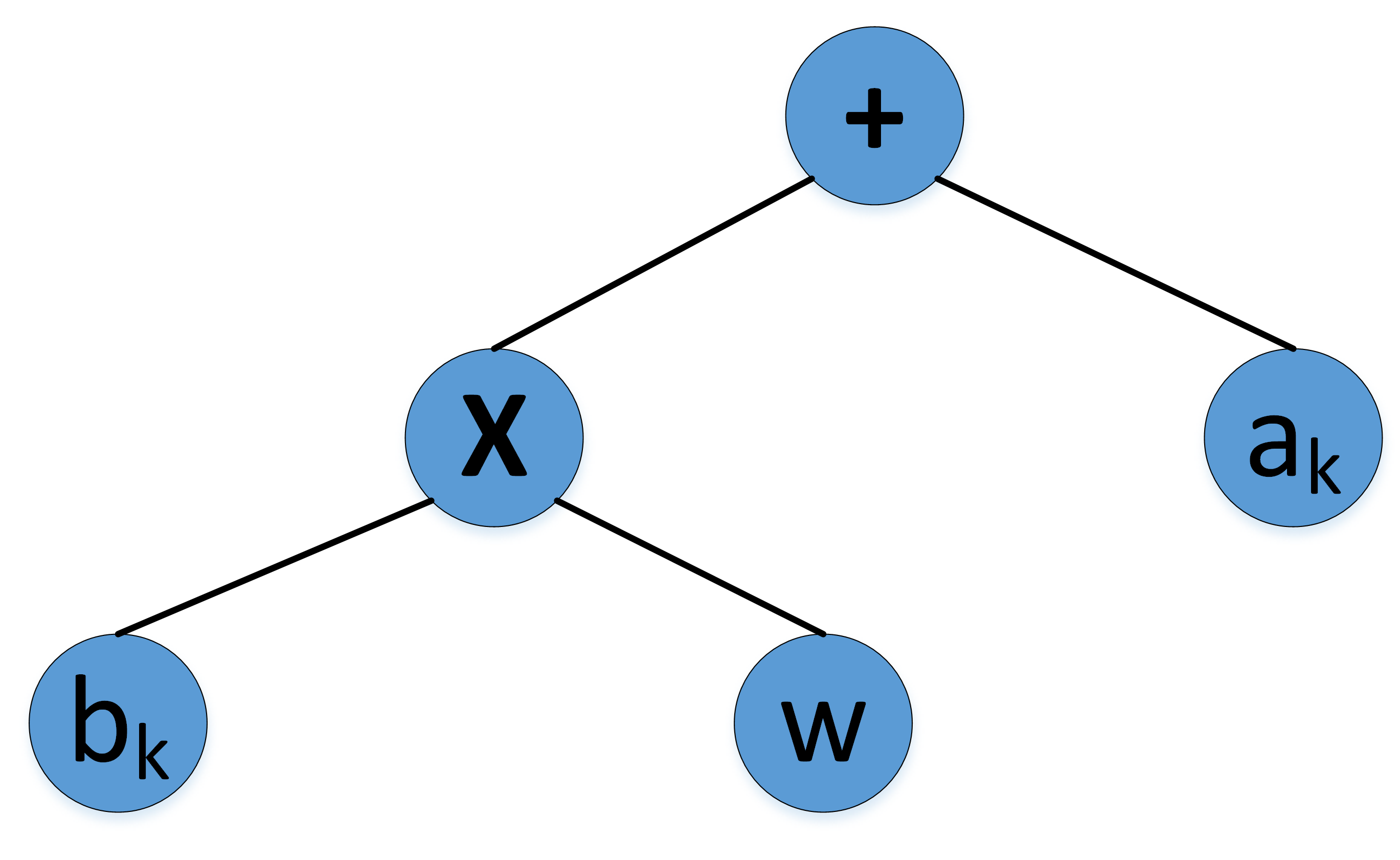}
	\caption{Binary Tree representation for algebraic expression $b_kw+a_k$}
	\protect\label{fig:symbinary}
\end{figure}

\par All the linear constraints are added into the constraint $Aw=b$ in \eqref{eq:exactproblem2} while the variables introduced are added into the vector $\bf w$ and depending on its type (either bilinear or linear fractional) its definition is added into either $\mathcal{T}_{\text{bt}}$ or $\mathcal{T}_{\text{lft}}$. After such reformulation, we obtain \eqref{eq:exactproblem2}. 
The new variable vector $\mathbf{w}$ consists of continuous and discrete variables in the original MINLP, as well as other auxiliary variables introduced as a result of reformulation. The objective function $w_f$ is a single auxiliary variable.  This reformulation ensures that the new objective function and first constraint in~(\ref{eq:exactproblem2}) are linear, and all non-convexities and non-linearities in the original MINLP are absorbed by the sets $\mathcal{T}_{bt}$ and $\mathcal{T}_{lft}$.

\subsection{Linear Constraint and Variable Creation}
As seen in Table \ref{tab:binaryrules}, certain arithmetic operations during the symbolic reformulation  require creation of new linear constraints and introduction of new variables. This can be easily explained by an example. Let the parent node (any intermediate node that has 2 child nodes) represent a multiplication operation, the left subtree (child) be $abc$ (expression where a,b are constants and c is a variable) and the right subtree be $d$ (variable), then using the rules in the Table \ref{tab:binaryrules}, we need to introduce a new linear constraint (for the left subtree) and then a bilinear variable. So the linear constraint would be $abc-w(i)=0$ where $w(i)$ is the $i^{\text{th}}$ auxiliary variable introduced. The bilinear variable that has to be introduced would be $w(i+1)\equiv w(i)d$ where d is the variable in the original right subtree. The linear constraint will become part of of $Aw=b$ in \eqref{eq:exactproblem2} while $w(i+1)$ will be part of the set of binary terms $\mathcal{T}_{\text{bt}}$
If the intermediate node was  a division operation, then we introduce the linear constraint just like we did for the multiplication operation, however we follow that by adding a linear fractional term $w(i+1)\equiv \frac{w(i)}{d}$. This is a recursive process and is repeated until all the terms in our objective function as well constraints are reformulated. Note that symbolic reformulation does not affect the linear terms in the original problem \eqref{eq:optimization}. Using this process, we reformulate  \eqref{eq:optimization} into \eqref{eq:exactproblem} that can then be used with V-SBB. 

\section{}\label{sec:appendixA}
\subsection{Bilinear Terms}
The McCormick linear overestimator and underestimator for bilinear terms with form $w_k \equiv w_i w_j$ are given by \eqref{eq:mcCormickover} and \eqref{eq:mcCormickunder}, respectively.
\begin{align}
w_k \leq w_i^l w_j+w_j^u w_i-w_i^lw_j^u, \nonumber \displaybreak[0]\\
w_k \leq w_i^u w_j+w_j^l w_i-w_i^uw_j^l.
\label{eq:mcCormickover}
\end{align}
\begin{align}
w_k \geq w_i^l w_j+w_j^l w_i-w_i^lw_j^l, \nonumber \displaybreak[0]\\
w_k \geq w_i^u w_j+w_j^u w_i-w_i^uw_j^u.
\label{eq:mcCormickunder}
\end{align}
\subsection{Linear Fractional Terms}
The linear overstimator and underestimator for a linear fractional term withform $w_k \equiv \frac{w_i}{w_j}$ are similar to the overstimator and underestimator of bilinear terms given in \eqref{eq:mcCormickover}  and \eqref{eq:mcCormickunder}.  We first transform the linear fractional term into bilinear term, i.e., $w_i \equiv w_k w_j$ and then we can use \eqref{eq:mcCormickoverlft} and \eqref{eq:mcCormickunderlft} for the the linear overstimator and underestimator, respectively.

\begin{align}
w_i \leq w_k^l w_j+w_j^u w_k-w_k^lw_j^u, \nonumber \displaybreak[0]\\
w_i \leq w_k^u w_j+w_j^l w_k-w_k^uw_j^l.
\label{eq:mcCormickoverlft}
\end{align}
\begin{align}
w_i \geq w_k^l w_j+w_j^l w_k-w_k^lw_j^l,\nonumber \displaybreak[0]\\
w_i \geq w_k^u w_j+w_j^u w_k-w_k^uw_j^u.
\label{eq:mcCormickunderlft}
\end{align}

The advantage of such linear understimator and overestimator is that even if the original problem is a non-convex MINLP,  the relaxed problem will be an MILP which is comparatively easy to solve. 

\section{}\label{sec:appendixB}

For the completeness, we present the following proofs for the convergence of spatial branch-and-bound \cite{liberti2004phdthesis}, which work for our V-SBB. 
\begin{definition}
Let $\Omega \subseteq \mathbb{R}^n$.  A finite family of sets $\mathcal{S}$ is a net for $\Omega$ if it is pairwise disjoint and it covers $\Omega$.	
\end{definition}
\begin{definition}
A net $\mathcal{S}'$ is a refinement of the net $\mathcal{S}$ if there are finitely many pairwise disjoint $s_i' \in \mathcal{S}'$ such that $s = \bigcup_i s_i' \in \mathcal{S}$ and $s \notin S$.
\end{definition}
\begin{definition}
	 Let $\mathcal{M}_n$ be an infinite sequence of subsets of $x$ such that $\mathcal{M}_i \in \mathcal{S}_i$. $\mathcal{M}_n$ is a \emph{filter} for $\mathcal{S}_n$ if $\forall i \in \mathbb{N} \; \mathcal{M}_i\subseteq \mathcal{M}_{i-1}$ where $M_{\infty}= \bigcap_{i\in N}M_i$ be the limit of the filter. 
\end{definition}	
	
\begin{definition}
 Let $x \subseteq \mathbb{R}^n$ and $f(x)$ be the objective function of an MINLP problem then a spatial branch-and-bound algorithm  would be convergent if $\gamma^*=\inf \;f(x)= \lim\limits_{k\to\infty}\gamma_k$	.
\end{definition}	
\begin{definition}
	 A selection rule is \emph{exact} if
	\begin{enumerate}
		\item  The infinimum objective function value of any region that remains qualified during the whole
		solution process is greater than or equal to the globally optimal objective function value, i.e., 
		\begin{equation*}
		\forall M \in \bigcap_{k=1}^{\infty}\mathcal{R}_k(\inf\; f(x\cap M) \geq \gamma^*)
		\end{equation*}
		\item The limit $\mathbb{M}_{\infty}$ of any filter $M_k$ is such that$\inf f(\Omega \cap M)\geq \gamma^*$ where $\Omega$ is the feasible set.
	\end{enumerate}
	\end{definition}	
    
     \begin{theorem}
    	A Spatial branch-and-bound algorithm using an exact selection rule converges.
    \end{theorem}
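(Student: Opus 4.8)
The plan is to show that the monotone sequence of global lower bounds $\gamma_k$ produced by the algorithm converges to the global optimum $\gamma^*$, which is exactly the stated notion of convergence. First I would establish the two elementary order properties of $\gamma_k$. Here $\gamma_k$ is the least lower bound $\phi^{\mathcal{R},l}$ over all subregions currently in the list $\mathcal{L}$ (the quantity used in Step~2). Branching replaces a region $\mathcal{R}$ by $\mathcal{R}_{\text{left}},\mathcal{R}_{\text{right}}$ whose McCormick under-estimators (Appendix~\ref{sec:appendixA}) are tighter, so $\phi^{\mathcal{R}_{\text{left}},l},\phi^{\mathcal{R}_{\text{right}},l}\geq\phi^{\mathcal{R},l}$, while fathoming only deletes regions; hence the least lower bound cannot decrease and $\gamma_k$ is nondecreasing. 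Moreover each $\phi^{\mathcal{R},l}$ underestimates $f$ on $\mathcal{R}$, and the region containing a global minimizer always carries a lower bound of at most $\gamma^*$, so $\gamma_k\leq\gamma^*$ for all $k$. By monotone convergence $\gamma_k\uparrow\bar\gamma$ for some $\bar\gamma\leq\gamma^*$.

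It then remains to exclude the strict case $\bar\gamma<\gamma^*$, which is the crux. Assuming strict inequality, the region selected at iteration $k$ by the least-lower-bound rule has $\phi^{\mathcal{R},l}=\gamma_k\leq\bar\gamma<\gamma^*$, so its bound stays strictly below the incumbent $\phi^u$ and it is never fathomed: the algorithm generates infinitely many nodes of a binary, hence finitely branching, search tree. By K\"onig's lemma this tree contains an infinite path, i.e. a nested sequence $M_0\supseteq M_1\supseteq\cdots$ of selected-and-branched subregions. This is precisely a \emph{filter} $\{M_k\}$ in the sense of the definitions above, with limit $M_\infty=\bigcap_k M_k$, and along it the associated lower bounds are squeezed between $\gamma_k$ and $\bar\gamma$ and therefore converge to $\bar\gamma$.

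The key step is to contradict this using consistency of the bounding operation together with property~(2) of an exact selection rule. As $M_k$ shrinks to $M_\infty$ the McCormick over- and under-estimators become exact, so $\lim_k\phi^{M_k,l}=\inf f(\Omega\cap M_\infty)$; by property~(2) the right-hand side is at least $\gamma^*$, forcing the filter lower bounds to converge to a value $\geq\gamma^*$ and contradicting their convergence to $\bar\gamma<\gamma^*$. Property~(1) of exactness disposes of the complementary possibility that some region stays qualified forever without being branched: its infimum is at least $\gamma^*$, so it cannot hold the global lower bound below $\gamma^*$ indefinitely. Combining both cases gives $\bar\gamma=\gamma^*$, i.e. $\lim_k\gamma_k=\gamma^*=\inf f(x)$. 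I expect the main obstacle to be making the consistency of the bounding in the limit rigorous, namely showing that the relaxation gap on $M_k$ vanishes as $M_k\downarrow M_\infty$ and that the infimum operator behaves continuously along the filter, since this is exactly where the geometry of the McCormick relaxation and the exhaustiveness of the branching must be coupled with property~(2).
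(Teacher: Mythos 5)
There is a genuine gap, and it sits exactly where you predicted. Your argument tracks the nondecreasing sequence of \emph{computed} lower bounds $\gamma_k=\min_{\mathcal{R}\in\mathcal{L}}\phi^{\mathcal{R},l}$ and reduces everything to the claim that along a filter $\{M_k\}$ the relaxation gap vanishes, i.e.\ $\lim_k\phi^{M_k,l}=\inf f(\Omega\cap M_\infty)$. That is a \emph{consistency/exhaustiveness} property of the McCormick bounding operation; it is not part of the hypothesis ``exact selection rule,'' and it is not established in your proposal. Both exactness conditions in the paper's definition constrain the true infimum $\inf f(\Omega\cap M)$ over a region, not the relaxed bound $\phi^{M,l}$, so neither condition can be used to force the computed lower bounds up to $\gamma^*$: a region (whether it stays qualified forever or is refined along a filter) can satisfy $\inf f(\Omega\cap M)\geq\gamma^*$ while its McCormick relaxation keeps reporting a strictly smaller value if $M_\infty$ is not degenerate enough for the envelopes to become tight. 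In your framing, condition (2) is in fact vacuous (with $\gamma^*=\inf_\Omega f$, the inequality $\inf f(\Omega\cap M_\infty)\geq\gamma^*$ holds trivially), so the entire weight of the contradiction rests on the unproved consistency claim. The same issue infects your use of condition (1).

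The paper's proof avoids this entirely by working on the other side of the sandwich: there $\gamma_k$ is the \emph{incumbent} (a feasible objective value, hence an upper bound), which is trivially $\geq\inf_\Omega f$, and the contradiction argument shows no feasible $x$ can have $f(x)<\gamma^*:=\lim_k\gamma_k$. Such an $x$ lies in a region that, by condition (1), cannot remain qualified forever, hence is split infinitely often; $x$ then belongs to the limit $M_\infty$ of a filter, and condition (2) gives $f(x)\geq\inf f(\Omega\cap M_\infty)\geq\gamma^*$ directly. No statement about the quality of the relaxation bounds is ever needed. If you want to keep your lower-bound route, you must add and verify an explicit consistency hypothesis (e.g.\ that the branching is exhaustive and that the McCormick over/underestimators converge to the true function as the box diameter tends to zero); otherwise, switch to the incumbent sequence as the paper does.
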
 
    \begin{proof}{Proof by contradiction:}
    	
    	Let there be $x \in \Omega$ with $f(x) < \gamma^*$. Let $x \in M$ with $M \in \mathbb{R}_n$ for some $n \in \mathbb{N}$. Because of the first condition of exactness of selection rule, the filter $M$ cannot remain qualified forever.  Furthermore, unqualified regions may not, by hypothesis, include points with better objective function values than the current incumbent $\gamma_k$. Hence $M$ must necessarily be split at some iteration $n'>n$ so $x$ belongs to every $\mathcal{M}_n$ in some filter $\{\mathcal{M}_n\}$, thus $x \in \Omega \cap \mathbb{M}_{\infty}$. By condition $2$ of exactness of selection rule, $f(x)\geq f(\Omega \cap M_\infty)\geq \gamma^*$. The result follows.
    \end{proof}



\end{document}